\newtheorem{theorem}{Theorem}
\newtheorem{proposition}[theorem]{Proposition}
\newtheorem{observation}[theorem]{Observation}
\newtheorem{corollary}[theorem]{Corollary}
\newtheorem{example}{Example}
\newtheorem{definition}{Definition}
\newcommand{\NP}{NP}
\newcommand{\ceil}[1]{ \lceil #1 \rceil }
\newcommand{\floor}[1]{ \left\lfloor #1 \right\rfloor }
\tikzset{
  block/.style={
    draw,
    rectangle,
    minimum height=0.5cm,
    minimum width=3.5cm,
    align=center
  },
  subblock/.style={
    draw,
    rectangle,
    minimum height=.75cm,
    minimum width=1.5cm,
    align=center
  },
  line/.style={->,>=latex'}
}
\newcommand{\clrstr}{20}
\newcommand{\N}{N} %
\newcommand{\cC}{C} %
\newcommand{\s}{W} %
\newcommand{\xl}{a_{i}} %
\newcommand{\xr}{b_{i}} %
\newcommand{\dist}{\text{dist}} %
\newcommand{\quickset}[1]{\{ #1 \}} %
\DeclareMathOperator*{\argmax}{arg\,max}
\patchcmd{\SOUL@ulunderline}{\dimen@}{\SOUL@dimen}{}{}
\patchcmd{\SOUL@ulunderline}{\dimen@}{\SOUL@dimen}{}{}
\patchcmd{\SOUL@ulunderline}{\dimen@}{\SOUL@dimen}{}{}
\newdimen\SOUL@dimen
\newcommand{\new}[1]{#1}
\title{Individual Representation in Approval-Based Committee Voting}
\author[1,2]{\fnm{Markus} \sur{Brill}}\email{markus.brill@warwick.ac.uk}
\author[1]{\fnm{Jonas} \sur{Israel}}\email{j.israel@tu-berlin.de}
\author[3,4]{\fnm{Evi} \sur{Micha}}\email{pmicha@usc.edu}
\author[1,5]{\fnm{Jannik} \sur{Peters}}\email{jannik.peters@tu-berlin.de}
\affil[1]{\orgdiv{Research Group Efficient Algorithms}, \orgname{TU Berlin}, \orgaddress{\country{Germany}}}
\affil[2]{\orgdiv{Department of Computer Science}, \orgname{University of Warwick}, \orgaddress{\country{UK}}}
\affil[3]{\orgdiv{Department of Computer Science}, \orgname{Harvard University}, \orgaddress{\country{USA}}}
\affil[4]{\orgdiv{Department of Computer Science}, \orgname{University of Southern California}, \orgaddress{\country{USA}}}
\affil[5]{\orgdiv{School of Computing}, \orgname{National University of Singapore}, \orgaddress{\country{Singapore}}}
\begin{document}

\abstract{
   When selecting multiple candidates based on approval preferences of voters, the proportional representation of voters' opinions is an important  and  well-studied desideratum. Existing criteria for evaluating the representativeness of outcomes focus on \textit{groups} of voters and demand that sufficiently large and cohesive groups are ``represented'' in the sense that candidates approved by \textit{some} group members are selected.  Crucially, these criteria say nothing about the representation of \textit{individual} voters, even if these voters are members of groups that deserve representation. 
   In this paper, we formalize the concept of \textit{individual representation (IR)} and explore to which extent, and under which circumstances, it can be achieved. 
We show that checking whether an IR outcome exists is computationally intractable, and we verify that all common approval-based voting rules may fail to provide IR even in cases where this is possible. We then focus on domain restrictions and establish an interesting contrast between ``voter interval'' and ``candidate interval'' preferences. This contrast can also be observed in our experimental results, where we analyze the attainability of IR for realistic preference profiles.
}

\maketitle

\section{Introduction}

We consider the problem of selecting a fixed-size subset of candidates (a so-called \textit{committee}) based on the approval preferences of voters. This problem has been extensively studied in recent years \citep[][]{LaSk22a} and has a wide variety of applications, including political elections \citep{BGP+24a}, recommender systems \citep{SLB+17a},
medical diagnostic decision-making \citep{medical}, 
\new{blockchain consensus protocols~\citep{BBC+24a},}
and participatory budgeting~\citep{PPS21a}. 

A central concern in committee voting is the principle of  \textit{proportional representation}, which states that the voters' interests and opinions should be reflected proportionately in the committee. 
While proportional representation is intuitive to understand in scenarios such as apportioning parliamentary seats based on vote shares \citep{BaYo82a,Puke14a}, it is less straightforward to formalize in the context of approval-based committee elections. Indeed, the literature has defined a number of different concepts aiming to capture proportional representation \citep{ABC+16a,SFF+17aNEW,PeSk20b,Skow21a,PPS21a,BrPe23a}. 

Most (if not all) of these approval-based proportionality notions focus on the representation of \textit{groups} of voters. Specifically, it is usually required that each sufficiently large group of voters is ``represented'' in the committee,\footnote{Often, there is also a condition on the ``cohesiveness'' of the group, stating that the approval preferences of group members need to be sufficiently aligned. \new{This requirement is extensively discussed by \citet{BrPe23a}.}}
 where the interpretation of ``representation'' differs across different notions. For example, \textit{extended justified representation} \cite{ABC+16a} prescribes that there exists at least one voter in the group approving a certain number of committee members, whereas \textit{proportional justified representation} \citep{SFF+17aNEW} demands that there are sufficiently many committee members that are each approved by at least one voter in the group. %
Notably, neither definition comprises any representation requirements for  \textit{individual} voters in a group.
Thus, a group may count as ``represented'' even though some voters in the group do not approve a single committee member.\footnote{\new{Axioms like extended justified representation offer significant lower bounds on the \textit{average} satisfaction of a voter group (e.g., a high proportionality degree \citep{Skow21a}). However, this still does not ensure representation of voters on the individual level.}}

In this paper, we adopt an {\em individualistic} point of view:  our goal is to provide all members of a voter group equal guarantees. 
Intuitively, when a population consists of $n$ voters and a committee of $k$ representatives is elected, we expect every cohesive voter group of size $\ell \cdot n/k$ to be represented by $\ell$ representatives in the committee; thus, each individual group member might reasonably hope that at least $\ell$ candidates represent her in the committee. This notion, which we call \textit{individual representation}, is aligned with the notion of ``individual fairness'' that was recently introduced in clustering (and in particular in facility location problems) by \citet{clust-fair-2020}: there, each individual expects to be served by a facility in distance proportional to the radius of the ball that captures its $n/k$ closest neighbors, where~$n$ is the number of individuals and $k$ is the number of facilities.

Individual representation, as defined in this paper, is a strengthening of a notion called \textit{semi-strong justified representation} by \citet{ABC+16a}. The latter property requires that all members of a group are represented in the committee \textit{at least once}, given that the group is large and cohesive enough. Individual representation strengthens this requirement by demanding that all members of cohesive groups are represented multiple times (in proportion to the group size).  
\citet{ABC+16a} observed that semi-strong JR cannot be provided in all instances; this immediately implies our stronger requirement is not universally attainable either.

In this paper, we systematically study \textit{individual representation (IR)}. Notwithstanding the observation that IR demands cannot always be met, we clarify how IR relates to existing axioms and we show that a large range of common approval-based committee voting rules can fail to provide IR even in cases where IR is achievable.
We observe that even committees \textit{approximating} IR may fail to exist. Moreover, we answer a question by \citet{ABC+16a} by showing that it is computationally intractable to decide whether a given instance admits a committee providing semi-strong JR or individual representation.
We then turn our attention to \textit{restricted domains} of preferences \citep{ElLa15a,yang2019tree} and demonstrate that positive results can be obtained. Doing so, we uncover a striking difference between the \textit{candidate interval} and \textit{voter interval} domains: whereas the former restriction does not admit any non-trivial approximation of IR, we devise an efficient algorithm for selecting committees approximating IR for the latter.
This is surprising insofar as these two domain restrictions often exhibit similar behavior \citep{PiSk22a,terzopoulou2021restricted}.\footnote{\new{A notable exception is the work by \citet{Pete18b}, who derives polynomial-time algorithms for the candidate interval domain, but not for the voter interval domain.}}
Finally, we experimentally study how often IR is achievable for a wide variety of generated preference data, and how often established  voting rules select IR outcomes.

\section{Preliminaries}
\label{sec:prelim}
For $t \in \mathbb{N}$, we let $[t]$ denote the set $\{1, 2,\dots, t\}$.
Let  $\N=[n]$ be a set of $n$ \emph{voters} %
and $\cC=\{c_1,\dots,c_m\}$ be a set of $m$ \emph{candidates}. 
Each voter $i \in \N$ approves a subset $A_i \subseteq \cC$ of candidates. An \emph{(approval) profile} %
$A = (A_1, \dots, A_n)$ contains the approval set $A_i$ of each voter $i \in N$. 
\new{We often illustrate approval profiles graphically, see Figures \ref{fig:noIR}, \ref{fig:IF-consistency_counterexamples}, and~\ref{fig:SSJR-counterexamples}.}

Given a committee size $k \in [m]$,
we want to select a subset $\s \subseteq \cC$ of size $|\s|=k$, referred to as a \textit{committee}. 
We call $(A,k)$ an \textit{approval-based committee (ABC) election}. An \textit{ABC voting rule} takes as input an ABC election $(A,k)$ and outputs one or more committees of size~$k$.

As is standard in the ABC election literature, we assume that voters only care about the number of approved candidates in the committee, i.e., voter $i$ evaluates a committee $\s$ by $\lvert \s \cap A_i \rvert$. 
Given a subset $S \subseteq \cC$ of candidates, we let $N(S)$ denote the set of voters who approve all candidates in $S$, i.e., $N(S) =  \{i \in \N \colon S \subseteq A_i\}$.

Given an ABC election $(A,k)$ and $\ell \in \mathbb{N}$, we call a group $V \subseteq \N$ of voters  \mbox{\emph{$\ell$-cohesive}} if 
$|V|\geq \ell \cdot  \frac{n}{k}$ and $\lvert \bigcap_{i \in V} A_i  \rvert \geq \ell$.
The following representation notions are due to \citet{ABC+16a} and \citet{SFF+17aNEW}.

\begin{definition}
Consider an ABC election $(A,k)$.
A committee $\s \subseteq C$ of size $k$ provides 
\begin{itemize}
    \item \emph{justified representation (JR)} if for each $1$-cohesive group $V \subseteq \N$, there is a voter $i \in V$ with $|W \cap A_i| \geq 1$;
    \item \emph{proportional justified representation (PJR)} if for each $\ell \in \mathbb{N}$ and each $\ell$-cohesive group $V \subseteq N$, it holds that $|W \cap (\bigcup_{i \in V} A_i)| \geq \ell$;
    \item \emph{extended justified representation (EJR)} if for each $\ell \in \mathbb{N}$ and each \mbox{$\ell$-cohesive} group $V \subseteq \N$, there is a voter $i \in V$ with $|W \cap A_i| \geq \ell$;
    \item \emph{core stability} if for each  group $V \subseteq N$ (independent of $V$ being $\ell$-cohesive) and $S \subseteq C$ with $|V| \ge |S| \cdot \frac{n}{k}$, there is a voter $i \in V$ with $|W \cap A_i| \geq |S \cap A_i|$.
\end{itemize}
\end{definition}

It is well-known that core stability implies EJR, which in turn implies PJR, which implies~JR \citep{ABC+16a,SFF+17aNEW}.
All of these notions have in common that they consider a group of voters ``represented'' as long as at least one voter in the group is sufficiently represented. %
This point of view might be hard to justify in many contexts. In the following section, we present our approach to representation that takes into account every voter in a group individually.

\section{Individual Representation}
\label{sec:ind_rep}

In this section, we define the main concept of this paper: individual representation. This notion builds on the idea of (semi-)strong justified representation as defined by \citet{ABC+16a} and the notion of individual fairness in clustering as defined by  \citet{clust-fair-2020}.

Similarly to the proportionality notions defined in \Cref{sec:prelim}, we assume that a voter deserves some representation in an ABC election if she can find enough other voters who all approve a subset of candidates in common. This follows the rationale that every member of a group of voters that \textit{(i)} makes up a sizable part of the electorate and 
\textit{(ii)} can come to an agreement on how (part of) the committee ought to be filled, should be represented accordingly.

Given an ABC election $(A,k)$, we determine the number of seats that voter $i \in N$ can justifiably demand as
\begin{align*}
    f_i \coloneqq \max_{S \subseteq A_i} \{|S| \colon |N(S)| \geq {|S|\cdot n/k }\}.
\end{align*}
In words, $f_i$ is the largest value $f$ such that voter $i$ can find enough like-minded voters to form an $f$-cohesive group.
\new{In particular, $f_i = 0$ for all voters $i$ who are not contained in any cohesive group of size at least $n/k$.}

\begin{definition}[Individual Representation]
Given an ABC election $(A,k)$, a committee $\s \subseteq C$ of size $|W|\le k$ provides \emph{individual representation (IR)} if $|\s \cap A_i| \geq f_i$ \; for all voters $i \in \N$.
\end{definition}  

When only requiring $\s \cap A_i \neq \emptyset$ for every voter with $f_i > 0$, we get \emph{semi-strong justified representation (semi-strong JR)} as defined by \citet{ABC+16a}. The authors of that paper provide an example showing that semi-strong JR committees do not always exist \new{(see Figure \ref{fig:noIR})}. Since individual representation clearly is a more demanding property, it immediately follows that \textit{IR committees} (i.e., committees providing IR) do not need to exist either. 

\begin{observation}
There exist instances of ABC elections that do not admit an IR committee.
\end{observation}

\begin{figure}
    \centering
        \scalebox{.75}{
    \begin{tikzpicture}
    [yscale=0.4,xscale=0.65,
    voter/.style={anchor=south}]
    
        \foreach \i in {1,...,9}
    		\node[voter] at (\i-0.5, -1.5) {$\i$};
        
        \draw[fill=orange!\clrstr] (0, 0) rectangle (3, 1);
        
        \draw[fill=violet!\clrstr] (2, 1) rectangle (5, 2);
        
        \draw[fill=teal!\clrstr] (4, 0) rectangle (7, 1);

        \draw[fill=red!40] (6, 1) rectangle (9, 2);
        
        \node at ( 1.5, 0.5) {$c_{1}$};
        \node at ( 5.5, 0.5) {$c_{3}$};
        \node at ( 3.5, 1.5) {$c_{2}$};
        \node at ( 7.5, 1.5) {$c_{4}$};
    \end{tikzpicture}}
    \caption{\new{Approval profile showing that IR committees do not
always exist. Voters correspond to integers and approve all candidates placed above them. %
For $k=3$, we have $n/k=3$ and $f_i=1$ for each voter $i \in [9]$. Clearly, there is no $W\subseteq \{c_1,c_2,c_3,c_4\}$ of size $|W|\le 3$ that satisfies $|W\cap A_i|\ge 1$ for all $i$. This instance appears in the paper by \citet{ABC+16a} as Example 7.}}
    \label{fig:noIR}
\end{figure}

One immediate follow-up question is whether we can guarantee IR in an approximate sense. To study this question, we introduce the notion of $(\alpha, \beta)$-individual representation, which  uses additive \textit{and} multiplicative approximation parameters. 
\begin{definition}[$(\alpha,\beta)$-IR]
Given an ABC election $(A,k)$, a committee $\s \subseteq C$ of size at most $k$ provides \emph{$(\alpha, \beta)$-individual representation ($(\alpha,\beta)$-IR)} if for every voter $i \in \N$ it holds that
   $\alpha \cdot |A_i \cap \s| + \beta \geq  f_i $, with $\alpha\geq 1$ and $\beta \geq 0$. 
\end{definition}
\noindent

Unfortunately, non-trivial approximation guarantees are impossible to obtain without restricting the set of profiles.
\begin{theorem}\label{thm:no-beta-approx} 
For every $k\ge 2$, there exists an instance $(A,k)$ that does not admit an $(\alpha, \beta)$-IR committee for $\beta < k-1$, and any $\alpha\geq 1$.
\end{theorem}

\vspace{-0.5cm}

\begin{proof}
Fix $k\ge 2$ and let $n = k\cdot (k+1)$. Note that $n/k = k+1 > k$. Consider the profile in which for each voter $i \in [n/k]$, we have  $A_i=\{c_{(k-1)\cdot (i-1)+1},\dots, c_{(k-1)\cdot i}\}$ and all remaining $n-n/k$ voters approve all candidates. That is, the approval sets of the first $n/k$ voters are pairwise disjoint and contain $k-1$ candidates each.

For every voter $i \in [n/k]$ we get that $\lvert N(A_i) \rvert = 1 + (n-n/k) = 1 + (k-1)(k+1)$. Since $n/k = k+1$, this implies that $f_i = k-1$ for all $i \in [n/k]$. Further, for all distinct voters $i, i' \in [n/k]$ it holds that $A_i \cap A_{i'} = \emptyset$. However, since $n/k>k$, for each $\s \subseteq \cC$ with $|\s|\leq k$ there is a voter $i \in [n/k]$ with $|A_i \cap \s|=0$. Thus, for any $\alpha \geq 1$ and $\beta < k-1$, this instance does not admit an $(\alpha, \beta)$-IR committee.
\end{proof}

To see that \new{this bound on $\beta$} is the worst-possible, note that if $f_i=k$ for some voter $i$, this means that \textit{all} voters have a set of at least $k$ jointly approved candidates (and a committee consisting of such candidates would provide IR). On the other hand, every committee trivially provides $(1,k-1)$-IR whenever $f_i<k$ for all $i\in N$.
We study approximation bounds for $(\alpha,\beta)$-IR on restricted domains in \Cref{sec:dom_rest}.

\subsection{Relation to other Proportionality Axioms}

We have already observed that IR is a strengthening of semi-strong JR. Furthermore, it is easy to see that every IR committee also provides EJR (and thus PJR and JR). On the other hand, there exist profiles where semi-strong JR committees exist that do not provide PJR.
To build intuition on how IR differs from the other notions, and on how it leads to the election of committees that might be considered ``fair'' from an individual voter perspective, consider the following two examples, illustrated in \Cref{fig:IF-consistency_counterexamples}.

\definecolor{mypink1}{rgb}{0.858, 0.188, 0.478}

\begin{figure}
    \centering
        \scalebox{.75}{
    \begin{tikzpicture}
    [yscale=0.4,xscale=0.65,
    voter/.style={anchor=south}]
    
        \foreach \i in {1,...,8}
    		\node[voter] at (\i-0.5, -1.5) {$\i$};
        
        \draw[fill=orange!\clrstr] (0, 0) rectangle (4, 1);
        
        \draw[fill=violet!\clrstr] (4, 0) rectangle (8, 1);
        
        \draw[fill=teal!\clrstr] (1, 1) rectangle (7, 2);
        
        \node at ( 2, 0.5) {$c_{1}$};
        \node at ( 6, 0.5) {$c_{2}$};
        \node at ( 4, 1.5) {$c_{3}$};
    \end{tikzpicture}}
    \hspace{2em}
    \vrule
    \hspace{2em}
    \scalebox{.75}{
    \begin{tikzpicture}
    [yscale=0.4,xscale=0.65, %
    voter/.style={anchor=south}]
    
        \foreach \i in {1,...,12}
    		\node[voter] at (\i-0.5, -1.5) {$\i$};
        
        \draw[fill=orange!\clrstr] (0, 0) rectangle (1, 1);
        \draw[fill=orange!\clrstr] (4, 0) rectangle (7, 1); 
        
        \draw[fill=red!40] (1, 1) rectangle (2, 2);
        \draw[fill=red!40] (4, 1) rectangle (6, 2);
        \draw[fill=red!40] (7, 1) rectangle (8, 2);
        
        \draw[fill=purple!35] (2, 2) rectangle (3, 3);
        \draw[fill=purple!35] (4, 2) rectangle (5, 3);
        \draw[fill=purple!35] (6, 2) rectangle (8, 3);
        
        \draw[fill=magenta!\clrstr] (3, 3) rectangle (4, 4);
        \draw[fill=magenta!\clrstr] (5, 3) rectangle (8, 4);
        
        \draw[fill=violet!40] (4, 4) rectangle (11,5);
        
        \draw[fill=blue!\clrstr] (4, 5) rectangle (10,6);
        \draw[fill=blue!\clrstr] (11,5) rectangle (12,6);
        
        \draw[fill=teal!\clrstr] (4, 6) rectangle (9, 7);
        \draw[fill=teal!\clrstr] (10,6) rectangle (12,7);
        
        \draw[fill=cyan!\clrstr] (4, 7) rectangle (8, 8);
        \draw[fill=cyan!\clrstr] (9, 7) rectangle (12,8);
        
        \draw[fill=green!\clrstr] (8, 0) rectangle (12,1);
        
        \draw[fill=lime!\clrstr] (8, 1) rectangle (12,2);
        
        \foreach \j in {1,...,4} {
            \node at (\j - 0.5, \j - 0.5) {$c_{\j}$}; }
        \foreach \j in {1,...,3} {
            \node at (4.5, \j - 0.5) {$c_{\j}$}; }
        \foreach \j in {2,...,4} {
            \node at (7.5, \j - 0.5) {$c_{\j}$}; }
        
        \foreach \j in {5,...,8} {
            \node at (5.5, \j - 0.5) {$c_{\j}$}; }
        \foreach \j in {6,...,8} {
            \node at (11.5, \j - 0.5) {$c_{\j}$}; }
        
        \foreach \j in {9, 10} {
            \node at (9.5, \j - 8.5) {$c_{\j}$}; }
    \end{tikzpicture}}
    \caption{
    Two profiles admitting IR committees that are not identified by common voting rules or proportionality axioms.
    }
    \label{fig:IF-consistency_counterexamples}
\end{figure}

\begin{example}\label{ex:threecandidates}
The first part of \Cref{fig:IF-consistency_counterexamples} shows an approval profile with $8$ voters and $3$ candidates. Assuming $k=2$, every voter $i \in N$ has $f_i = 1$. Thus, the only committee providing IR  is $\s = \quickset{c_1, c_2}$, which represents every voter once and, moreover, satisfies core stability. However, both $\s'=\{c_1,c_3\}$ and  $\s''=\{c_2,c_3\}$ are core stable as well (and in fact would be selected when choosing a committee maximizing the total number of approvals). %
Many common ABC voting rules would select either $\s'$ or $\s''$ (see \Cref{ssJRvsComMono}). One can argue that committee $\s$
is a ``fairer'' or ``more representative'' choice in this example.
\end{example}

\begin{example}\label{ex:tencandidates}
The second part of \Cref{fig:IF-consistency_counterexamples} shows an approval profile with $12$ voters and $10$ candidates. For $k=6$, we have $f_i = 1$ for $i \in \quickset{1, \ldots, 4}$ and $f_i = 2$ for $i \in \quickset{5, \ldots, 12}$. Here, the only committee providing IR is $\s = \quickset{c_1, c_2, c_3, c_4, c_9, c_{10}}$, representing each of the first four voters once, while representing all other voters at least twice. This committee is not core stable, because the group consisting of voters $5$ to $12$ would prefer $\{c_5,c_6,c_7,c_8\}$ to $\s$.
In order to appreciate the IR committee~$\s$, consider voters $1$ to $4$ and observe that these voters are completely ``symmetric.''  
Hence, from an ``equal treatment of equals'' perspective, 
if one of them is represented by an approved candidate in the committee, the same should hold for the others. In fact, the only core-stable committees that provide this kind of symmetry are $\{c_5,\dots,c_{10}\}$, in which one third of the electorate is not represented at all, or committees containing only two candidates among $c_5$ to $c_8$.
In the latter case,  by noticing that voters $9$ to $12$ are ``symmetric'' as well, we can argue similarly as above that they are not treated equally. 
Thus, the committee $\s$ that uniquely provides IR might be considered the ``fairest'' choice under an individualistic point of view.
\end{example}

The instance in \Cref{ex:tencandidates} shows that core stability and individual representation are incompatible in the strong sense that for this instance, the (nonempty) set of IR committees and the (nonempty) set of core-stable committees are disjoint. 

\begin{restatable}{proposition}{ircore}\label{thm:ircore}
IR is incompatible with core stability.
\end{restatable}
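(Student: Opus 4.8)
The plan is to prove the stronger disjointness claim by exhibiting a single witness, namely the instance of \Cref{ex:tencandidates}, with $n=12$ voters, $m=10$ candidates, and $k=6$ (so $n/k=2$). First I would record the relevant approval sets read off from the profile: in particular $A_i=\{c_i\}$ for $i\in\{1,2,3,4\}$, while each of $c_5,c_6,c_7,c_8$ is approved by exactly seven of the voters in $\{5,\dots,12\}$ (each candidate missing a different one). From this I would confirm the values $f_i=1$ for $i\in\{1,2,3,4\}$ and $f_i=2$ for $i\in\{5,\dots,12\}$. The only slightly delicate computation is the upper bound $f_i\le 2$ for the latter voters: one checks that no three candidates in such a voter's approval set are jointly approved by the required $3\cdot n/k=6$ voters, whereas a suitable pair (e.g.\ $\{c_5,c_6\}$, approved by $\{5,\dots,10\}$) meets the threshold $2\cdot n/k=4$, giving $f_i=2$.

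Next I would establish that $\s=\{c_1,c_2,c_3,c_4,c_9,c_{10}\}$ is the \emph{unique} IR committee, which is a short forcing chain rather than a calculation. Since $A_i=\{c_i\}$ and $f_i=1$ for $i\le 4$, every IR committee must contain $c_1,c_2,c_3,c_4$, using up four of the $k=6$ seats. The remaining two seats must raise each of voters $9,10,11,12$ to representation at least $f_i=2$; as none of $c_1,\dots,c_4$ lies in their approval sets, both remaining candidates must be approved by all four of these voters. Since $A_9\cap A_{10}\cap A_{11}\cap A_{12}=\{c_9,c_{10}\}$, the last two seats are forced to $c_9$ and $c_{10}$, yielding $\s$; a final check confirms that $\s$ represents voters $5,\dots,8$ at least twice as well, so $\s$ is IR and is the only IR committee.

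It then remains to separate $\s$ from the core. I would exhibit the deviation $V=\{5,\dots,12\}$, $S=\{c_5,c_6,c_7,c_8\}$, which satisfies the size requirement of core stability, and verify that every $i\in V$ strictly prefers $S$: the ``diagonal'' voters $5,\dots,8$ approve all four candidates of $S$ but only three members of $\s$, while voters $9,\dots,12$ approve three candidates of $S$ but only the two candidates $c_9,c_{10}$ of $\s$. Hence $\s$ is blocked and is not core-stable. To complete the disjointness claim I would exhibit one core-stable committee, namely $\{c_5,\dots,c_{10}\}$, witnessing that the core is nonempty; since the IR set is the singleton $\{\s\}$ and $\s$ is not in the core, the two nonempty sets are disjoint.

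The main obstacle is the last verification that $\{c_5,\dots,c_{10}\}$ is core-stable, which, unlike the other steps, is not a one-line forcing argument but requires ruling out every blocking coalition. I would organise this by the candidates a deviating set $S$ may contain, using that voters $1,\dots,4$ have pairwise disjoint singleton approval sets (so no small coalition can simultaneously improve several of them within its permitted size bound) and that voters $5,\dots,12$ already receive many approved candidates under $\{c_5,\dots,c_{10}\}$ (so improving all members of a large coalition is impossible). This case analysis, though routine, is where the real work lies.
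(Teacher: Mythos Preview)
Your proposal is correct and follows essentially the same route as the paper: the paper's proof of \Cref{thm:ircore} is simply a pointer to \Cref{ex:tencandidates}, where the unique IR committee $\{c_1,c_2,c_3,c_4,c_9,c_{10}\}$ is shown to be blocked by $V=\{5,\dots,12\}$ with $S=\{c_5,c_6,c_7,c_8\}$, and $\{c_5,\dots,c_{10}\}$ is asserted to be core-stable. You are filling in exactly the details the paper leaves implicit (the forcing argument for uniqueness of the IR committee, the strict-improvement check for the blocking coalition, and the case analysis for core stability of $\{c_5,\dots,c_{10}\}$), and your identification of the last verification as the only nontrivial step is accurate.
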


  \begin{figure}[t]
      \centering
      
    \scalebox{.9}{
      \begin{tikzpicture}
      [block/.style={minimum width = 2.4cm, minimum height = 0.5cm, rectangle, draw , font=\small},
      connec/.style={thick},
      notexist/.style={fill=gray!30},
      mightexist/.style={fill=gray!30, postaction={pattern={Lines[angle=-45, 
      distance=3mm,
      line width=10mm
      ]},pattern color=white,}}]
      
        \node[block, mightexist] (a) {core stability};
        \node[block, right = 0.8cm of a] (b) {EJR};
        \node[block, right = 0.8cm of b] (c) {PJR};
        \node[block, right = 0.8cm of c] (d) {JR};
        \node[block, notexist, above = 0.4cm of b] (e) {IR};
        \node[block, notexist, above = 0.4cm of d] (f) {semi-strong JR};
        \node[block, notexist, above = 0.4cm of f] (h) {PR};
         \draw[line, connec] (a.east)--(b.west);
        \draw[line, connec] (b.east)--(c.west);
        \draw[line, connec] (c.east)--(d.west);
        \draw[line, connec] (e.east)--(f.west);
        \draw[line, connec] (e.south)--(b.north);
        \draw[line, connec] (f.south)--(d.north);
        \draw[line, connec] (h.south)--(f.north);
        \draw[line, connec] (h.south west)--(c.north);
      \end{tikzpicture}
    }
     \caption{Relationships between different notions of representation. An arrow from $X$ to $Y$ signifies that $X$ implies $Y$. %
     A committee providing one of the shaded notions does not always exist (the case for core stability is an open problem). \new{PR is defined in \Cref{app:rel_perfect_repr}.}}
    \label{fig:relationship-notion-fairness}
  \end{figure}

Next, we show further incompatibility results for semi-strong JR. 

\begin{figure}
    \centering
        \scalebox{.75}{
    \begin{tikzpicture}
    [yscale=0.4,xscale=0.65,
    voter/.style={anchor=south}]
    
        \foreach \i in {1,...,8}
    		\node[voter] at (\i-0.5, -1.5) {$\i$};
        
        \draw[fill=orange!\clrstr] (0, 0) rectangle (4, 1);

        \draw[fill=red!\clrstr] (0, 1) rectangle (4, 2);
        
        \draw[fill=violet!\clrstr] (4, 0) rectangle (6, 1);
        
        \draw[fill=teal!\clrstr] (4, 1) rectangle (5, 2);
        \draw[fill=teal!\clrstr] (6, 1) rectangle (7, 2);

        \draw[fill=lime!\clrstr] (4, 2) rectangle (5, 3);
        \draw[fill=lime!\clrstr] (7, 2) rectangle (8, 3);
        
        \node at ( 2, 0.5) {$c_{1}$};
        \node at ( 2, 1.5) {$c_{2}$};
        \node at ( 5, 0.5) {$c_{3}$};
        \node at ( 4.5, 1.5) {$c_{4}$};
        \node at ( 6.5, 1.5) {$c_{4}$};
        \node at ( 4.5, 2.5) {$c_{5}$};
        \node at ( 7.5, 2.5) {$c_{5}$};
    \end{tikzpicture}}
    \caption{
    \new{Profile showing that semi-strong JR is incompatible with PJR, EJR, and corestability.
    } 
    }
    \label{fig:SSJR-counterexamples}
\end{figure}

\begin{restatable}{proposition}{ssjrejr}\label{thm:ssjrejr}
Semi-strong JR is incompatible with PJR, EJR, and core stability.
\end{restatable}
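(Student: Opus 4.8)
The plan is to exhibit a single ABC election in which \emph{every} committee satisfying semi-strong JR necessarily violates EJR, while both properties are separately attainable. Since a committee providing both properties would in particular provide semi-strong JR, it suffices to pin down the small family of semi-strong-JR committees of the instance and to check that each member of this family fails EJR. The design exploits the different ``shape'' of the two demands: EJR can be met for a cohesive group by representing a \emph{single} member well, whereas semi-strong JR must represent \emph{every} deserving voter at least once and can therefore be forced to spread the committee across many distinct candidates.

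Concretely, I would first plant a $2$-cohesive group $V$ whose members all approve exactly two common candidates $\{a_1,a_2\}$ and nothing else. Because $V$ is $2$-cohesive and its members approve only $\{a_1,a_2\}$, EJR forces some member of $V$ to obtain two approved seats, which means $\{a_1,a_2\}\subseteq W$. Next I would add, for each of $k-1$ ``filler'' candidates $c_1,\dots,c_{k-1}$, a voter whose approval set is the singleton $\{c_j\}$; to make these voters deserving (i.e.\ $f_i\ge 1$) I would include $q-1$ auxiliary voters approving all of $\{c_1,\dots,c_{k-1}\}$, where $q=n/k$, so that each $c_j$ is approved by exactly $q$ voters. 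Since each singleton voter $u_j$ has $A_{u_j}=\{c_j\}$ and $f_{u_j}\ge 1$, semi-strong JR forces $c_j\in W$ for all $j$, and the members of $V$ additionally force at least one of $a_1,a_2$ into $W$; with only $k$ seats available this leaves room for exactly one of $a_1,a_2$, so $\{a_1,a_2\}\not\subseteq W$ and EJR fails for $V$.

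The main obstacle is the bookkeeping that makes all three ingredients coexist at the same quota $q=n/k$: the group $V$ already consumes $2q$ voters, leaving little room for the forcing voters, yet every filler candidate must still attract $\ge q$ approvals. Counting voters gives $n=2q+(q-1)+(k-1)=3q+k-2$, and imposing $n=kq$ forces $q=(k-2)/(k-3)$, which is integral for $k=4$, yielding the clean instance $k=4$, $q=2$, $n=8$, with $V$ of size $4$ approving $\{a_1,a_2\}$, one auxiliary voter approving $\{c_1,c_2,c_3\}$, and three voters approving $c_1$, $c_2$, $c_3$ respectively. After fixing this instance, the remaining steps are routine: compute $f_i$ for every voter to confirm that all voters are deserving (so their semi-strong-JR demands are active), enumerate the two semi-strong-JR committees $\{c_1,c_2,c_3,a_1\}$ and $\{c_1,c_2,c_3,a_2\}$, and observe that in each of them every member of the $2$-cohesive group $V$ receives only one approved candidate, contradicting EJR. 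A brief remark that semi-strong JR and EJR are each attainable in this instance (e.g.\ the committee $\{a_1,a_2,c_1,c_2\}$ provides EJR) confirms that the incompatibility is non-vacuous.
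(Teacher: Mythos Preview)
Your proposal is correct and is essentially the same construction as the paper's own proof: with $n=8$, $k=4$, four voters approving a common pair $\{a_1,a_2\}$, one auxiliary voter approving $\{c_1,c_2,c_3\}$, and three singleton voters approving $c_1$, $c_2$, $c_3$ respectively---this is exactly the instance the paper uses (up to renaming candidates). Your additional remark that EJR is separately attainable (e.g.\ by $\{a_1,a_2,c_1,c_2\}$) makes the non-vacuity of the incompatibility explicit, which the paper's proof leaves implicit.
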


\vspace{-0.5cm}

\begin{proof}
Consider an ABC election with $n=8$, $k=4$ and the following approval profile: $A_1=\dots=A_4 = \quickset{c_1,c_2}$, $A_5 = \quickset{c_3,c_4,c_5}$, $A_6 = \quickset{c_3}$, $A_7 = \quickset{c_4}$, and $A_8 = \quickset{c_5}$. \new{For an illustration of this instance, see \Cref{fig:SSJR-counterexamples}.}
As $f_6=f_7=f_8=1$, every committee $\s$ that provides semi-strong JR must satisfy $\{c_3,c_4,c_5\} \subseteq \s$. But then we have \mbox{$\lvert W \cap (\cup_{i\in [4]}A_i)\rvert =1$}, even though the first four voters form a $2$-cohesive group.
As a consequence semi-strong JR is incompatible with PJR and EJR. 
Moreover, as in this instance the core is nonempty (e.g., the committee $\quickset{c_1,c_2,c_3,c_4}$ is core stable), we can also deduce that semi-strong JR is incompatible with core stability.
\end{proof}

In \Cref{app:rel_perfect_repr} we also establish the relation between \emph{perfect representation (PR)} as defined by \citet{SFF+17aNEW} and the two axioms we are interested in.
A graphical representation of the results of this section can be found in \Cref{fig:relationship-notion-fairness}.

\subsection{ABC Rules Violating IR}
\label{sec:ir_consistency}

Next, we consider the question whether we can find ABC rules that select IR committees \textit{whenever they exist}.  This question was already raised by \citet{ABC+16a} in the context of semi-strong JR, but remained open.
In other words, we look for rules that are ``consistent'' with individual representation. 

\begin{definition}[IR-consistency]
An ABC rule is \emph{consistent with individual representation}, or short \emph{IR-consistent}, if it outputs at least one IR committee for every ABC election that admits one.
\end{definition}

Consistency with semi-strong JR can be defined analogously. 
We show that all common ABC voting rules fail consistency with respect to both IR and semi-strong~JR.\footnote{Since neither IR nor semi-strong JR is always achievable (and semi-strong JR may be achievable in instances where IR is not) we can, in general, not deduce consistency regarding one of the notions from consistency regarding the other. However, all our examples in this section satisfy $f_i \leq 1$ for all voters $i$, such that semi-strong JR and IR coincide.}

\Cref{ex:threecandidates} already rules out any rule that always selects one of the candidates with the highest numbers of approvals, so-called \emph{approval winners}. In particular, this class of rules includes all common committee-monotonic ABC rules as well as other ``sequential'' rules like the Method of Equal Shares (MES), as these rules select one of the approval winners in the very first round.\footnote{For definitions of ABC rules not defined in this paper, we refer the reader to the survey by \citet{LaSk22a}.
\new{For a formal definition of sequentiality, see \citet{BDI+23a}.}
}

\begin{restatable}{proposition}{ssjrmono}\label{ssJRvsComMono}
No ABC voting rule that always selects one of the approval winners is IR-consistent.
\end{restatable}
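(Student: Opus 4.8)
The plan is to reuse the profile from \Cref{ex:threecandidates} and show directly that it forces any approval-winner-selecting rule to miss the unique IR committee. Recall that this instance has $n = 8$ voters, three candidates $c_1, c_2, c_3$, and committee size $k = 2$, where $c_1$ is approved by voters $1$--$4$, $c_2$ by voters $5$--$8$, and $c_3$ by voters $2$--$7$. First I would tabulate the approval scores: $c_1$ and $c_2$ each receive four approvals, while $c_3$ receives six. Hence $c_3$ is the unique approval winner in this profile.

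Next I would recall from \Cref{ex:threecandidates} that every voter has $f_i = 1$, so the unique committee providing IR is $\s = \set{c_1, c_2}$, which notably does \emph{not} contain the approval winner $c_3$. The key logical step is then immediate: any ABC rule that always places an approval winner into its output committee is forced to include $c_3$. Since $k = 2$, such a rule must output either $\set{c_1, c_3}$ or $\set{c_2, c_3}$ (or a committee of size less than $k$, which we can exclude or handle by noting it still contains $c_3$ and omits one of $c_1, c_2$). In either case, one of the four voters approving the omitted candidate is left with $|A_i \cap \s| = 0 < 1 = f_i$, so the output fails IR.

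I would conclude by observing that, because $f_i \le 1$ for all voters, IR and semi-strong JR coincide on this instance (as noted in the footnote preceding the statement), so the same argument rules out consistency with semi-strong JR. I do not expect a genuine obstacle here: the entire argument is a short case analysis once the approval scores are computed, and the heavy lifting — exhibiting the profile and verifying that $\set{c_1,c_2}$ is the unique IR committee — has already been done in \Cref{ex:threecandidates}. The only point requiring mild care is to make the notion of ``selecting an approval winner'' precise enough that one cannot escape by outputting a smaller committee or by breaking ties cleverly; I would phrase the hypothesis so that the rule commits to including \emph{some} approval winner, which in this profile unavoidably means $c_3$.
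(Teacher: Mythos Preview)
Your proposal is correct and matches the paper's own argument essentially verbatim: the paper also relies on \Cref{ex:threecandidates}, observing that $c_3$ is the (unique) approval winner while the only IR committee is $\{c_1,c_2\}$, so any rule forced to include an approval winner must output $\{c_1,c_3\}$ or $\{c_2,c_3\}$ and hence fail IR. The additional remark that IR and semi-strong JR coincide here is also exactly what the paper notes.
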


Moreover, the rules PAV, Satisfaction-AV, and reverse-seqPAV select only committees including $c_3$ in \Cref{ex:threecandidates}, and thus fail IR-consistency as well. 
In \Cref{app:consistency} we provide additional examples showing that all remaining ABC rules mentioned in Table~4.1 of the survey by \citet{LaSk22a} fail IR-consistency as well.

\subsection{Computational Complexity}

Another open problem stated by \citet{ABC+16a} concerns the computational complexity of deciding whether a given ABC election admits a committee providing semi-strong~JR. We settle this question and the analogous one for individual representation by showing that both problems are \NP-hard. 

\begin{restatable}{theorem}{irnpcomplete}
It is \NP-hard to decide whether an ABC election admits an IR committee or a semi-strong JR committee.
\end{restatable}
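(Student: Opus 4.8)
The plan is to reduce from \emph{Vertex Cover} on cubic graphs, which is \NP-hard, and to engineer instances in which every voter $i$ satisfies $f_i \in \set{0,1}$. Because all $f_i\le 1$, individual representation and semi-strong JR coincide on these instances, so a single reduction settles both existence problems at once. In this regime the IR requirement $\lvert \s\cap A_i\rvert\ge f_i$ collapses to the purely combinatorial demand that the committee hit the approval set of every voter with $f_i=1$; the entire task is therefore to make the ``demanders'' (the voters with $f_i=1$) together with their approval sets encode exactly the edges of a graph.

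First I would preprocess the given instance $(G,b)$ so that the budget satisfies $b\ge \nu/2$, where $\nu$ denotes the number of vertices. This is achieved by adding disjoint copies of $K_4$: each copy is cubic and forces exactly three cover vertices, so adding $t$ copies increases $\nu$ by $4t$ and the required budget by $3t$, which shifts $b-\nu/2$ upward by $t$ while preserving the yes/no answer. After this step I may assume $G$ is cubic with $b\ge \nu/2$.

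Next I would build the ABC election. Introduce one candidate $c_v$ per vertex $v$ and, for every edge $\set{u,v}$, one \emph{edge-voter} approving exactly $\set{c_u,c_v}$. Then pad with $3\cdot(b-\nu/2)$ dummy voters, each approving a private candidate that no one else approves, and set $k=b$. A direct count gives $n = 3\nu/2 + 3(b-\nu/2) = 3b$, hence $n/k = 3$, which equals the constant degree of $G$. Consequently each $c_v$ is approved by exactly $3 = n/k$ voters, so every edge-voter has $f_i\ge 1$, whereas each dummy voter's private candidate has a single approver $<n/k$ and thus yields $f_i=0$.

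The crucial step, which I expect to be the main obstacle, is to verify that no voter reaches $f_i\ge 2$, so that the instance genuinely lives in the $f_i\le 1$ regime. Only edge-voters approve more than one candidate, and in a simple graph any two candidates $c_u,c_v$ are jointly approved by at most one voter; since $1 < 2\cdot n/k = 6$, no two-element subset witnesses $f_i\ge 2$. Hence $f_i=1$ for edge-voters and $f_i=0$ for all dummies. It then remains to observe the equivalence: a committee provides IR (equivalently, semi-strong JR) if and only if it intersects $\set{c_u,c_v}$ for every edge, that is, if and only if the selected vertex-candidates form a vertex cover. As $k=b$, such a committee exists if and only if $G$ admits a vertex cover of size at most $b$, which establishes \NP-hardness for both the IR and the semi-strong JR existence problem.
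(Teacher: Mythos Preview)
Your reduction is correct. The padding with $K_4$'s correctly forces $b\ge \nu/2$ while preserving cubicity and the yes/no answer; with $n=3b$ and $k=b$ you get $n/k=3$, every vertex candidate is approved by exactly three edge-voters (so each edge-voter has $f_i\ge 1$), no pair of vertex candidates is jointly approved by more than one voter (so $f_i<2$), and every dummy's private candidate has a single approver (so $f_i=0$). The equivalence with vertex cover then follows exactly as you describe, and since all $f_i\le 1$ the IR and semi-strong JR questions coincide.

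The paper takes a different and somewhat shorter route: it reduces from the restricted variant of \textsc{Exact Cover by 3-Sets} in which every element lies in exactly three sets. Elements become voters, sets become candidates, and $k=\ell$ (with $n=3\ell$), so again $n/k=3$. The double regularity of this X3C variant makes the construction self-balancing: every candidate is approved by exactly three voters and every voter approves exactly three candidates, which immediately yields $f_i=1$ for all $i$ without any padding or dummy voters. An IR (equivalently semi-strong JR) committee is then precisely an exact cover. Compared to this, your reduction trades the ``exactness'' bookkeeping for a preprocessing step and auxiliary voters; one modest by-product is that your instances have approval sets of size at most~$2$, so you in fact show hardness even under that extra restriction, whereas the paper's instances have approval sets of size exactly~$3$.
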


\vspace{-0.5cm}

\begin{proof}
We reduce from \emph{exact cover by 3-sets}. Here, we are given a set of elements $X = \lbrace x_1, \dots, x_{3\ell} \rbrace$ and a collection $\mathcal T \subseteq 2^X$  of $3$-element subsets of $X$.  The goal is to find a partition of $X$ into sets from $\mathcal T$. The problem is \NP-hard even if each element appears in exactly three sets \cite{GaJo79a}.
We construct an ABC instance by setting $N=X$ and $\cC = \{c_i \mid T_i \in \mathcal{T}\}$, i.e., for each $T_i\in \mathcal{T}$ we have a candidate $c_i$.
 Further, for each set $T_i = \{x_{i_1},x_{i_2},x_{i_3}\}$  the candidate $c_i$ is approved exactly by voters $x_{i_1},x_{i_2}$, and $x_{i_3}$. We set $k = \ell$. Hence, only groups of $3$ voters corresponding to sets in $\mathcal T$ are 1-cohesive, and we get \new{$f_{x_i} \ge 1$} for each $x_i \in X$. 

Every exact cover by 3-sets corresponds to a committee of size $k$ where every voter is represented exactly once and thus provides IR in this instance. Conversely, every IR committee of the constructed ABC instance corresponds to a selection of sets from $\mathcal T$ such that every element in $X$ is covered exactly once.
Since $f_i = 1$ for every voter, the same argument holds for semi-strong JR as well.
\end{proof}

Moreover, it is hard to compute a voter's $f_i$-value.

\begin{restatable}{theorem}{finpcomplete}
Given an ABC instance, a voter $i \in N$, and $j \in \mathbb N$, it is \NP-complete to decide whether $f_i \ge j$ holds.
\end{restatable}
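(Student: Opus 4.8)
The plan is to prove membership in \NP\ and \NP-hardness separately. Membership is immediate: a set $S \subseteq A_i$ with $|S| \ge j$ serves as a certificate, since one can verify $|N(S)| \ge |S| \cdot n/k$ in polynomial time by scanning the profile. The crux is therefore hardness, for which I would reduce from \textsc{Balanced Complete Bipartite Subgraph} (balanced biclique): given a bipartite graph $G = (U,V,E)$ and a target $t$, decide whether $G$ contains a $K_{t,t}$; this problem is \NP-complete \cite{GaJo79a}.

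The guiding observation is that, for $S \subseteq A_i$, the set $N(S)$ is a common neighborhood, so that a large $S$ with a large $N(S)$ is precisely a biclique. Concretely, I would introduce one candidate $c_u$ for each $u \in U$ and set $A_i = \set{c_u : u \in U}$ for the distinguished voter $i$; for each $v \in V$ I would create a voter approving $\set{c_u : (u,v) \in E}$. Then, for nonempty $S \subseteq A_i$, the set $N(S)$ consists of voter $i$ together with exactly those $V$-voters whose vertex is adjacent to every $u$ with $c_u \in S$, i.e.\ the common neighborhood $\mathrm{cn}(S)$ of $S$ in $G$. Deciding $f_i \ge t$ thus amounts to finding an $S$ with $|S| \ge t$ whose common neighborhood is large, which is what a $K_{t,t}$ provides.

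The main obstacle is an unavoidable ``$+1$'': because $S \subseteq A_i$, voter $i$ always lies in $N(S)$, so $|N(S)| = 1 + |\mathrm{cn}(S)|$, and this off-by-one would corrupt the balance condition $|\mathrm{cn}(S)| \ge |S|$ that we want to read off as a biclique. I would neutralize it by a blow-up: replace each $V$-voter with $M \ge 2$ identical copies, and pad the instance with dummy voters (approving no $c_u$) and dummy candidates so that $n/k = M$ holds exactly while $k \le m$. The requirement $|N(S)| \ge |S| \cdot n/k$ then reads $1 + M\,|\mathrm{cn}(S)| \ge M\,|S|$, i.e.\ $|\mathrm{cn}(S)| \ge |S| - 1/M$, which for integers collapses to the clean condition $|\mathrm{cn}(S)| \ge |S|$.

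It then remains to verify the equivalence in both directions. If $G$ contains $K_{t,t}$, taking $S$ to be its $t$ candidate-vertices gives $|\mathrm{cn}(S)| \ge t = |S|$, so $f_i \ge t$. Conversely, any $S$ with $|S| \ge t$ and $|\mathrm{cn}(S)| \ge |S|$ yields, after restricting $S$ to any $t$ of its elements (common neighborhoods only grow under restriction), a $K_{t,t}$; this restriction step also handles the case where the witnessing $S$ is strictly larger than $t$. Setting $j = t$ completes the reduction and establishes \NP-completeness.
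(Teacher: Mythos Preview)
Your argument is correct and follows essentially the same route as the paper's: both reduce from \textsc{Balanced Complete Bipartite Subgraph} by identifying a large $S \subseteq A_i$ with large $N(S)$ as a biclique. The only difference is in how the ``$+1$'' from the distinguished voter is absorbed: you use $M$-fold voter replication plus integrality, whereas the paper more economically sets $k=n$ (so $n/k=1$), adds a single universally approved dummy candidate, and tests $f_i \ge t+1$ instead of $t$.
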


\vspace{-0.5cm}

\begin{proof}
It is easy to see that this problem is in \NP\, since any subset of voters including voter $i$ of size $j \cdot \frac{n}{k}$ and any subset of candidates of size $j$ approved by all selected voters serves as a witness. 

We reduce from \emph{balanced complete bipartite subgraph}. Here, we are given a bipartite graph $G = (V_1 \cup V_2, E)$ and an integer $j$ and the goal is to decide whether $G$ has $K_{j,j}$ as a subgraph, i.e., a subgraph consisting of $j$ vertices from $V_1$ and $j$ vertices from $V_2$ forming a bipartite clique. The problem is known to be \NP-hard \cite{GaJo79a}.
We construct an ABC instance by setting $N = V_1 \cup \{x\}$, $\cC = V_2 \cup \{y\}$ and $k = \lvert V_1 \rvert + 1$. Thus, $\frac{n}{k} 
= 1$. Each $v \in V_1$ approves exactly its neighbors in $G$, as well as $y$, while $x$ approves all candidates. It follows that $f_x \ge j+1$ if and only if there is a set of $j$ voters different from $x$ approving at least a common set of $j+1$ candidates. Since all voters approve $y$, this is equivalent to these $j$ voters all approving $j$ candidates different from $y$ and therefore by definition all being connected to these $j$ vertices in $V_2$. Thus, they form a $K_{j,j}$ if and only if $f_x \geq j + 1$.
\end{proof}

\section{Domain Restrictions}
\label{sec:dom_rest}

We have seen (\Cref{thm:no-beta-approx}) that non-trivial approximations of individual representation are impossible to obtain in general.
In this section, we explore whether this negative result can be circumvented by considering restricted domains of preferences. Domain restrictions for dichotomous (i.e., approval) preferences have been studied by \citet{ElLa15a} and \citet{yang2019tree}. 

Restricting attention to a well-structured domain often allows for axiomatic and algorithmic results that are not achievable otherwise \citep{ELP17a}.  
In the ABC setting, for example, it has recently been shown that a core-stable committee always exists in certain restricted domains \citep{PiSk22a}, whereas the existence of such committees is an open problem for the unrestricted domain.

We start by recalling the definitions of two classic restricted domains of dichotomous preferences: \textit{candidate interval} and \textit{voter interval} \citep{ElLa15a}.

\begin{definition}[Candidate Interval] An approval profile $A$ satisfies \emph{candidate interval (CI)} if there is a linear order over the candidates $C$ such that for every voter  $i \in N$, the approval set $A_i$ forms an interval of that order. 
\end{definition}

\begin{definition}[Voter Interval] An approval profile $A$ satisfies \emph{voter interval (VI)} if there is a linear order over the voters $N$ such that for every candidate $c_j \in C$, the set $N(\{c_j\})$ of voters approving $c_j$ forms an interval of that order. 
\end{definition}  
The profile in \Cref{ex:threecandidates} satisfies \textit{both} candidate interval and voter interval. In fact, a voter order witnessing VI is given in \Cref{fig:IF-consistency_counterexamples}. To see that the profile satisfies CI as well, consider the order $(c_1, c_3, c_2)$. The profile in \Cref{ex:tencandidates}, on the other hand, satisfies neither CI nor VI.

\citet{ElLa15a} have shown that it can be checked in polynomial time whether a profile satisfies CI or VI. (If the answer is yes, a linear order over candidates/voters can be found efficiently as well.)

Our first observation is that the \textit{candidate interval} domain is not helpful for our purposes: Indeed, the approval profile used to establish \Cref{thm:no-beta-approx} can easily be seen to satisfy CI. Thus, restricting preferences in this way does not yield any improved bounds.

\begin{corollary}
For every $k\ge 2$, there exists a CI profile $A$ such that $(A,k)$ does not admit an $(\alpha,\beta)$-IR committee with $\beta <k-1$ and any $\alpha \geq 1$.
\end{corollary}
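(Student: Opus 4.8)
The plan is to observe that the single instance constructed in the proof of \Cref{thm:no-beta-approx} already lies in the candidate interval domain, so no new construction is required: it suffices to exhibit a linear order over the candidates under which every voter's approval set forms an interval, and then to invoke \Cref{thm:no-beta-approx} verbatim.

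First I would recall that instance. We set $n = k\cdot(k+1)$, so that $n/k = k+1$. The first $n/k = k+1$ voters each approve a block of $k-1$ candidates, namely voter $i \in [k+1]$ approves $\{c_{(k-1)(i-1)+1}, \dots, c_{(k-1)i}\}$; these blocks are pairwise disjoint and together exhaust all $m = (k-1)(k+1)$ candidates. The remaining $n - n/k$ voters approve the entire candidate set $\cC$.

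Next I would verify CI using the natural order $(c_1, c_2, \dots, c_m)$. For each of the first $k+1$ voters, the approval set is by construction a block of consecutive candidates $c_{(k-1)(i-1)+1}, \dots, c_{(k-1)i}$, hence an interval of this order. For each of the remaining voters, the approval set is all of $\cC$, which is the full interval of the order. Thus every approval set is an interval, and the profile satisfies candidate interval.

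Finally, since this is precisely the instance of \Cref{thm:no-beta-approx}, it admits no $(\alpha,\beta)$-IR committee for any $\alpha \geq 1$ and any $\beta < k-1$, which establishes the corollary. I do not anticipate a genuine obstacle here; the only point requiring a moment's care is confirming that the disjoint candidate blocks are simultaneously consecutive under one fixed order (which holds for the identity order, since the blocks are defined as consecutive index ranges), and that the ``approve-everything'' voters impose no constraint on the ordering because the full candidate set is trivially an interval.
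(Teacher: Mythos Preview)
Your proposal is correct and follows exactly the paper's approach: the paper simply remarks that the approval profile used to establish \Cref{thm:no-beta-approx} ``can easily be seen to satisfy CI,'' and you have spelled out that verification in detail by exhibiting the natural candidate order $(c_1,\dots,c_m)$ and checking that each approval set is an interval.
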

 
Now, we turn our attention to the \textit{voter interval} domain. Due to the similarity between VI and CI, one might expect a similar result here. Surprisingly, however, we can prove a positive result for VI: We provide an algorithm that finds a $(2,4)$-IR committee in polynomial time for any VI profile. 

Before describing the high level idea of our algorithm, we state a useful property of VI profiles. 
Without loss of generality, we assume that the linear order witnessing VI is given by $(1,\dots,n)$. Moreover, for $a, b \in \mathbb{Z}$ with $a\le b$, we let $[a,b]$ denote the integer interval $\{a, a+1, \ldots, b\}$.
\begin{algorithm}[t]
\caption{$(2,4)$-IR for Voter Interval Profiles}\label{alg:ind-fair-DUE}
\begin{algorithmic}[1]
	\State $\s_0 \gets \emptyset$ \Comment{Round 1}
	\For {$i=1$ to $n$}
	    \State $S_i \gets \emptyset$
		\If{$|\s_{i-1} \cap A_i| < \floor{|N_{\geq i}|\cdot k/(2n) }$}
			 \State Let $S_i$ be an arbitrary subset of $S_i^*$ of size equal to $\floor{|N_{\geq i}|\cdot k/(2n)}-|\s_{i-1} \cap A_i|$ such that $|S_i\cap \s_{i-1} |$ is minimized
				
		\EndIf
		\State  $\s_i \gets \s_{i-1} \cup S_i$
	\EndFor		\label{algor:ind-fair-DUE-first-loop}
	\State $\hat{\s}_{0}  \gets \emptyset$ \Comment{Round 2}
	\For {$i=n$ to $1$}
	\State $S_i \gets \emptyset$
		\If{$|\hat{\s}_{n-i} \cap A_i| < \floor{|N_{< i}|\cdot k/(2n) }$}
		    \State Let $S_{i}$ be an arbitrary subset of $S_{i}^*$ of size equal to $\floor{|N_{< i}|\cdot k/(2n)}-|\hat{\s}_{n-i} \cap A_i|$ such that $|S_{i}\cap (\hat{\s}_{n-i}  \cup W_{n})|$ is minimized
		\EndIf
	\State  $\hat{\s}_{n-i+1} \gets \hat{\s}_{n-i} \cup S_{i}$
	\EndFor	\label{algor:ind-fair-DUE-second-loop}
	\State $S\gets$  an arbitrary subset of $\cC$ with $S\cap (W_n \cup \hat{\s}_{n})=\emptyset$ and $|S|=k-|W_n|-|\hat{W}_n|$
	\\ \Return $ \s_{n}  \cup \hat{\s}_{n} \cup S$
\end{algorithmic}
\end{algorithm}

\begin{restatable}{observation}{supportlem}
\label{lemm:supporters}
Let $i_1,i_2,i_3 \in [n]$ such that $i_1 < i_2 <i_3$. For any $S \subseteq \cC$, if $i_1 \in N(S)$ and $i_3 \in N(S)$, then $i_2 \in N(S)$.
\end{restatable}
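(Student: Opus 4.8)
The plan is to reduce the statement to the defining property of the voter interval domain, applied one candidate at a time. Recall that, by the standing assumption, the order witnessing VI is $(1, \ldots, n)$, so for every single candidate $c \in \cC$ the set of its approvers $N(\{c\}) = \{i \in \N : c \in A_i\}$ is an integer interval of this order; write it as $[a_c, b_c]$.

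First I would record the elementary identity $N(S) = \bigcap_{c \in S} N(\{c\})$, which follows directly from the definition $N(S) = \{i \in \N : S \subseteq A_i\}$: a voter lies in $N(S)$ exactly when it approves every $c \in S$. This turns the hypothesis $i_1, i_3 \in N(S)$ into the statement that $i_1, i_3 \in N(\{c\})$ for every $c \in S$.

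Then I would fix an arbitrary $c \in S$ and invoke the interval property: from $i_1, i_3 \in [a_c, b_c]$ together with $i_1 < i_2 < i_3$ we obtain $a_c \le i_1 < i_2 < i_3 \le b_c$, hence $i_2 \in [a_c, b_c] = N(\{c\})$. Since $c$ was arbitrary, $i_2$ approves every candidate in $S$, so $i_2 \in \bigcap_{c \in S} N(\{c\}) = N(S)$, which is exactly the claim.

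There is no genuine obstacle here; the conclusion is essentially the ``convexity'' of each approver set guaranteed by VI, and the only point deserving a brief remark is the degenerate case $S = \emptyset$, where $N(\emptyset) = \N$ and the statement holds vacuously. Conceptually, the observation just says that $N(S)$, being an intersection of intervals all of which contain both $i_1$ and $i_3$, is itself an interval containing the enclosed index $i_2$---the structural fact that the two-pass sweep of \Cref{alg:ind-fair-DUE} later relies on.
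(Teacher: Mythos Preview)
Your argument is correct and is exactly the intended justification: the paper states this as an observation without proof, since it follows immediately from the VI definition applied candidate by candidate, which is precisely what you spell out. There is nothing to add or compare.
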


Let $S_i^* \in \argmax_{S \subseteq  A_i} \{|S|\colon |N(S)|\geq |S| \cdot n/k \}$, i.e., a largest subset of $A_i$ approved by sufficiently many voters to validate the $f_i$-value. (If multiple such sets exist, we pick one of them arbitrarily.) From \Cref{lemm:supporters} we know that if $i_1$ and $i_2$ exist such that $i_1 < i_2 <i$ or $i < i_2 <i_1$, and if $i_1 \in N(S_i^*)$, then $i_2 \in N(S_i^*)$, i.e., $N(S_i^*)$ forms an interval of the order of voters including~$i$.  

Further, let  
$\N_{<i} \coloneqq \{i' \in N(S_i^*) \colon i' < i \}$ denote the set of voters in $N(S_i^*)$ that are ordered \textit{before} $i$ and let
$ \N_{\geq i}\coloneqq \{i' \in N(S_i^*) \colon i' \ge i\}$
denote the set of voters in $N(S_i^*)$ that are ordered \textit{after} $i$ (including $i$ itself).

\begin{observation}\label{Cor:shape-of-supp}
For each voter $i$, there exist $\xl \in [1,i]$  and $\xr \in [i,n]$ such that
$N_{<i}=[\xl,  i-1] \text{ and }
N_{\geq i }=[i,\xr]$. 
\end{observation}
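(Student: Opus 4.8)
The plan is to prove Observation~\ref{Cor:shape-of-supp} by showing that each of the two sets $N_{<i}$ and $N_{\geq i}$ is a contiguous integer interval of the required form, and that this follows almost directly from the interval structure of $N(S_i^*)$ established just before the statement. The key fact I would lean on is that $N(S_i^*)$ forms an interval of the voter order that contains $i$ itself. Indeed, $i \in N(S_i^*)$ because $S_i^* \subseteq A_i$, so $i$ approves every candidate in $S_i^*$; and by \Cref{lemm:supporters} applied to the triple $i_1 < i_2 < i_3$, whenever the two endpoints of a triple lie in $N(S_i^*)$, so does the middle element. Hence $N(S_i^*)$ has no ``gaps'': if it contained two voters $p < q$ it would contain every voter between them, so $N(S_i^*) = [p^*, q^*]$ for some $p^* \le i \le q^*$.

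First I would make the interval claim for $N(S_i^*)$ fully explicit. Define $\xl \coloneqq \min N(S_i^*)$ and $\xr \coloneqq \max N(S_i^*)$; these exist since $N(S_i^*) \neq \emptyset$ (it contains $i$). By the gap-free property derived from \Cref{lemm:supporters}, together with $i \in N(S_i^*)$, I would conclude that
\[
N(S_i^*) = [\xl, \xr], \qquad \xl \le i \le \xr,
\]
which immediately gives $\xl \in [1,i]$ and $\xr \in [i,n]$ as required by the statement.

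Next I would simply read off the two pieces from their definitions. By definition $N_{<i} = \{i' \in N(S_i^*) : i' \le i-1\}$, which is $[\xl,\xr] \cap [1, i-1] = [\xl, i-1]$ using $\xl \le i$ (so the lower endpoint is $\xl$) and $i-1 \le \xr$ (since $i \le \xr$, so the intersection truncates at $i-1$). Symmetrically, $N_{\geq i} = \{i' \in N(S_i^*) : i' \ge i\} = [\xl,\xr] \cap [i,n] = [i, \xr]$, using $\xl \le i$ and $i \le \xr$. Thus $N_{<i} = [\xl, i-1]$ and $N_{\geq i} = [i, \xr]$, completing the proof.

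The argument is essentially bookkeeping once the interval structure of $N(S_i^*)$ is in hand, so I do not anticipate a genuine obstacle. The one point that needs a little care is the \emph{boundary behaviour}: when $\xl = i$ the set $N_{<i} = [i, i-1]$ is empty, and when $\xr = i$ the set $N_{\geq i} = [i,i]$ is the singleton $\{i\}$, so I would note that the claimed interval forms remain correct under the convention that $[a,b] = \emptyset$ when $a > b$. Making sure these degenerate cases are covered (rather than excluded) is the only subtlety worth flagging, and it is handled by the inequalities $\xl \le i \le \xr$ that place $i$ inside the interval $[\xl,\xr]$.
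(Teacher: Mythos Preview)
Your proposal is correct and follows exactly the approach the paper intends: the paper states this observation without a formal proof, treating it as an immediate consequence of the preceding remark (derived from \Cref{lemm:supporters}) that $N(S_i^*)$ forms an interval of the voter order containing~$i$. You have simply made that derivation explicit by setting $\xl=\min N(S_i^*)$, $\xr=\max N(S_i^*)$ and intersecting with $[1,i-1]$ and $[i,n]$, which is precisely the intended reasoning.
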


Using this observation and the fact that $f_i=|S^*_i| \leq (|N_{< i}|+ |N_{\geq i}|)\cdot k/n $, \Cref{alg:ind-fair-DUE} returns a $(2,4)$-IR committee $\s$ for any VI profile as follows.  In the first round, iterating from voter $i=1$ to $n$, it selects at least $\floor{|N_{\geq i}|\cdot k/(2n) }$ candidates  that are approved by voter $i$. In the second round, iterating from voter $i=n$ to $1$, it selects at least $\floor{|N_{< i}|\cdot k/(2n) }$ candidates that are approved by voter $i$ (excluding the candidates that are selected in the first round). Together, this ensures $ |\s \cap A_i| \geq f_i/2-2$, where $\s$ is the set of selected candidates.

\begin{restatable}{theorem}{vi22aprox}\label{theor:IR_VI}
For every instance $(A,k)$ such that $A$ satisfies voter interval, 
\Cref{alg:ind-fair-DUE} returns a $(2,4)$-IR committee in polynomial time.
\end{restatable}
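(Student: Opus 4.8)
The plan is to fix an arbitrary voter $i$ and prove that the returned committee $\s=\s_n\cup\hat\s_n\cup S$ satisfies $2\lvert\s\cap A_i\rvert+4\ge f_i$, i.e.\ $\lvert\s\cap A_i\rvert\ge f_i/2-2$; combined with $\lvert\s\rvert=k$ and a polynomial running time this yields $(2,4)$-IR. The arithmetic backbone is the inequality recorded just before the theorem, $\lvert N_{<i}\rvert+\lvert N_{\ge i}\rvert=\lvert N(S_i^*)\rvert\ge f_i\cdot n/k$, which after dividing by $2n/k$ and discarding at most one per floor gives $\floor{\lvert N_{\ge i}\rvert k/(2n)}+\floor{\lvert N_{<i}\rvert k/(2n)}\ge f_i/2-2$. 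Hence it suffices to show that the two passes together place at least $\min\bigl(f_i,\ \floor{\lvert N_{\ge i}\rvert k/(2n)}+\floor{\lvert N_{<i}\rvert k/(2n)}\bigr)$ candidates of $A_i$ into $\s$: if the minimum is the sum we are done by the displayed bound, and if it is $f_i$ then $\lvert\s\cap A_i\rvert\ge f_i\ge f_i/2-2$.

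First I would establish the first-pass guarantee: after the first loop, $\lvert\s_n\cap A_i\rvert\ge\min\bigl(f_i,\floor{\lvert N_{\ge i}\rvert k/(2n)}\bigr)$. When the loop reaches $i$ it appends a subset $S_i\subseteq S_i^*\subseteq A_i$ raising $\lvert\s_i\cap A_i\rvert$ to the threshold $\floor{\lvert N_{\ge i}\rvert k/(2n)}$; the only way this can fail is that $S_i^*$ contains fewer than the required number of candidates outside $\s_{i-1}$, but then the overlap-minimizing rule selects \emph{all} of $S_i^*\setminus\s_{i-1}$, which already forces $\lvert\s_i\cap A_i\rvert\ge\lvert\s_{i-1}\cap A_i\rvert+(f_i-\lvert S_i^*\cap\s_{i-1}\rvert)\ge f_i$. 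Since the loop only ever adds candidates, the bound for $\s_i$ propagates to $\s_n$. The second loop is the mirror image run from $n$ down to $1$, with the crucial difference that it minimizes overlap with $\s_n\cup\hat\s_{n-i}$ and therefore prefers candidates that are simultaneously fresh and disjoint from the first-pass committee. The same case analysis shows that, for voter $i$, the second loop either contributes $\floor{\lvert N_{<i}\rvert k/(2n)}$ candidates of $S_i^*$ that avoid $\s_n$, or else $S_i^*$ is already so thoroughly contained in $\s_n\cup\hat\s_n$ that all $f_i$ of its candidates lie in $\s$. Combining the two passes gives the $\min\{\cdot,\cdot\}$ bound above, and hence $(2,4)$-IR.

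The step I expect to be the main obstacle is the size bound $\lvert\s_n\rvert+\lvert\hat\s_n\rvert\le k$, which is what makes the final padding line well defined and guarantees $\lvert\s\rvert=k$. I plan to prove it by a charging argument that is exactly where \Cref{Cor:shape-of-supp} (and thus voter interval) is indispensable: every candidate appended while processing $i$ in the first loop belongs to $S_i^*$ and is therefore approved by the whole interval $N_{\ge i}=[i,\xr]$, while the threshold caps the number of candidates appended for $i$ at $\floor{\lvert N_{\ge i}\rvert k/(2n)}$, i.e.\ at rate $k/(2n)$ per voter of that interval. Charging each appended candidate to a block of $\lceil 2n/k\rceil$ voters of its window $N_{\ge i}=[i,\xr]$ and arguing (via a Hall-type/left-to-right packing that exploits the fact that these windows all begin at the voter $i$ currently processed) that the blocks can be chosen pairwise disjoint bounds the first loop's output by $k/2$; the second loop is bounded by $k/2$ symmetrically using the left intervals $N_{<i}=[\xl,i-1]$ scanned in decreasing order. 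Establishing the collision-freeness of this charging is the technical crux, and it is precisely the interval packing that fails under candidate interval, consistent with the negative CI result.

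Finally, for the running time it remains to compute each $S_i^*$ (equivalently $f_i$) efficiently. This is \NP-hard in general, but under voter interval \Cref{lemm:supporters} tells us that $N(S)$ is an interval around $i$ for every $S\subseteq A_i$, so $S_i^*$ can be found by sweeping over the $O(n^2)$ candidate supporter windows $[p,q]\ni i$, taking for each window the candidates of $A_i$ whose own approver interval contains $[p,q]$, and keeping the largest feasible selection. Both loops then run in polynomial time, completing the proof.
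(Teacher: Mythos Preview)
Your proposal follows the same three-part skeleton as the paper: (i) the two passes together give each voter $i$ at least $\lfloor |N_{\ge i}|k/(2n)\rfloor+\lfloor |N_{<i}|k/(2n)\rfloor\ge f_i/2-2$ approved committee members, (ii) the two passes select at most $k$ candidates in total, and (iii) each $S_i^*$ can be computed by enumerating the $O(n^2)$ voter intervals containing $i$. The only substantive difference is in part (ii). You propose a charging/Hall-type interval-packing argument, whereas the paper proves the explicit invariant $|W_i|\le\bigl((i-1)+|N_{\ge i}|\bigr)\cdot k/(2n)$ by induction on $i$ (and the symmetric statement for $\hat W$). The inductive step exploits exactly the phenomenon your packing would rely on---whenever an earlier window $[t',b_{t'}]$ reaches the current voter $t$, all of $S_{t'}^*$ lies in $A_t$ and already counts toward $|W_{t-1}\cap A_t|$---but makes it rigorous by tracking $t^*=\max\{r:t\in N(S_{t-r}^*)\}$ and telescoping back to step $t-t^*-1$. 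Your charging picture is conceptually appealing, but the ``collision-freeness'' you correctly flag as the crux, once written out, reduces to essentially this same induction; the two arguments are reformulations of one another rather than genuinely different routes.

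One place where you are in fact \emph{more} careful than the paper: you explicitly treat the edge case where $|S_i^*|=f_i$ is smaller than the target $\lfloor|N_{\ge i}|k/(2n)\rfloor$, falling back to $|W\cap A_i|\ge f_i$, and you note that the second pass must contribute candidates outside $W_n$ (or else $S_i^*\subseteq W$ already). The paper's proof states the per-round guarantees as if these cases do not arise and then adds the two floors, leaving both points implicit.
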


\vspace{-0.5cm}

\begin{proof}
Let $\s=\s_{n} \cup \hat{\s}_{n} \cup S$ be the committee returned by \Cref{alg:ind-fair-DUE}, and let  $f_{\geq i}=  |N_{\geq i}|\cdot k/n$ and $f_{<i}= |N_{<i}|\cdot k/n$. In the first round we ensure that $|W_{n} \cap A_i| \geq \floor{f_{\geq i}/2} $, as at iteration $i$ if $|W_{i-1}\cap A_i|< \floor{f_{\geq i}/2} $, we include $ \floor{f_{\geq i}/2}-|W_{i-1}\cap A_i|$ candidates into $\s_i$ that are not already included. Similarly, in the second round we ensure that   $|\hat{\s}_{n} \cap A_i| \geq \floor{f_{< i}/2} $   as at iteration $n-i+1$ if $|\hat{\s}_{n-i}\cap A_i|< \floor{f_{< i}/2} $, we include $ \floor{f_{< i}/2}-|\hat{\s}_{n-i}\cap A_i|$ candidates into $\hat{W}_{n-i+1}$ that are not already included.
As  $f_i \leq f_{\geq i}+f_{<i}$, for each $i \in N$ we have that
\[|\s \cap A_i| \geq  \floor{f_{\geq i}/2}+\floor{f_{<i}/2} 
\geq f_i/2-2 \text,\]
and therefore $2 \cdot |\s \cap A_i| + 4 \ge f_i $. Thus, we conclude that~$\s$ provides $(2,4)$-IR.

\medskip

Now we show that $|\s_{n}|\leq k/2$ and $|\hat{\s}_n| < k/2$. We first consider $\s_n$.
\begin{restatable}{lemma}{lessk22lemmaround1}
\label{lem:IR-VI-1}
$|\s_i|\leq \frac{((i-1)+ |N_{\geq i}|)\cdot k}{2n}$ for all $i \in [n]$. 
\end{restatable}

\vspace{-0.5cm}

\begin{proof}
\renewcommand{\qedsymbol}{\diamond}
We prove the lemma using induction.
For $i=1$, $\s_1=\floor{f_{\geq 1}/2}\leq \frac{ |N_{\geq 1}|\cdot k}{2n} $ and the statement holds.
Assume that for all $ t' <t$, we have 
$|\s_{t'}|\leq  \frac{((t'-1)+ |N_{\ge t'}|)\cdot k}{2n}$.

We show that the statement holds for $\s_t$.
Note that 
\begin{align}
   |\s_{t}|= |\s_{t-1}|+ \floor{f_{\geq t}/2}-|\s_{t-1}\cap A_t| \label{eq:VI-t-1}
\end{align}
as at iteration $t$, the algorithm  adds $\floor{f_{\geq t}/2}-|\s_{t-1}\cap A_t|$ candidates to $\s_{t}$.
\new{Let $t^*=\max\{r \in \{0,\ldots, t-1\}\colon t \in N(S^*_{t-r}) \}$. 
In words, $t^*$ denotes the leftmost voter in the linear order $(1,\ldots, n)$ such that  $t$ approves $N(S^*_{t^*})$. Note that $[t-t^*, t]\subseteq N(S^*_{t^*})$. }

\new{First,  assume that  $t^*=0$. This means that $t$ does not approve any    $N(S^*_{t''})$ for $t''<t$. From this we get that  $|N_{\ge t-1}|=1$, since  $N_{\ge t-1}$ is an interval that contains agents to the right side of $t-1$ (including $t-1$), but since $t$ is not part of this interval, no agent to the right of $t$ can be part of it either. Then, 
from \Cref{eq:VI-t-1} we have that}

\begin{align*}
   |\s_{t}|\leq&  \frac{((t-2)+ |N_{\ge t-1}|)\cdot k}{2n} 
    + \floor{f_{\geq t}/2}-|\s_{t-1}\cap A_t| \\
   \leq &\frac{((t-2)+ 1)\cdot k}{2n}+ \floor{f_{\geq t}/2}-|\s_{t-1}\cap A_t|\\
   \leq &\frac{(t-1+ |N_{\geq t}|  )\cdot k}{2n}.
\end{align*}

Now assume that $t^*>0$.
First, using induction, we show that 
\begin{align*}
|\s_{t}|= |\s_{t-r}|+ \floor{f_{\geq t}/2}- |\s_{t-r} \cap A_t|
\end{align*}
for every $r \in [1,t^*+1]$. \new{Intuitively, this follows from the fact that all the candidates that are  added during iterations from $t -t^*$ up to $t-1$ are approved by voter $t$ as well, since $t\in N(S^*_{t-r})$ for all $r \in [1,t^*]$.} For $r=1$, the claim immediately  follows from \Cref{eq:VI-t-1}. Assume that for all $q'<q$ it holds that $|\s_{t}|= |\s_{t-q'}|+ \floor{f_{\geq t}/2}- |\s_{t-q'} \cap A_t|$.
We have 
\begin{align}
    |\s_{t-(q-1)} \cap A_{t}|= &\floor{f_{\geq t-(q-1)}/2} - |\s_{t-q} \cap A_{t-(q-1)}| 
    + |\s_{t-q}\cap A_t|\label{eq:VI-induction}
\end{align} 
as at iteration $t-(q-1)$ we add $\floor{f_{\geq t-(q-1)}/2}-{|\s_{t-q} \cap A_{t-(q-1)}|}$ candidates from $S^*_{t-(q-1)}$ to $\s_{t-(q-1)}$, and as $t \in N(S^*_{t-(q-1)})$, these candidates are approved by $t$, too. Then, 
\begin{align*}
|\s_{t}|&=| \s_{t-(q-1)}| + \floor{f_{\geq t}/2}-|\s_{t-(q-1)}\cap A_t| \\
&= |\s_{t-q}|+ \floor{f_{\geq t-(q-1)} /2}-|\s_{t-q} \cap A_{t-(q-1)}| + \floor{f_{\geq t}/2}-|\s_{t-(q-1)}\cap A_t| \\
&= |\s_{t-q}|+ \floor{f_{\geq t}/2}-|\s_{t-q}\cap A_t|,
\end{align*}
where the second transition follows since at iteration $t-(q-1)$, $\floor{f_{\geq t-(q-1)}/2}-|\s_{t-q} \cap A_{t-(q-1)}|$ candidates are added to $W_{t-(q-1)}$, and 
the third transition follows from \Cref{eq:VI-induction}.

Now, we distinguish two cases.

\smallskip

\noindent \textbf{Case 1:} $t^*=t-1$.
Here, we have 
\begin{align*}
    |\s_t|&=|\s_{t-(t^*+1)}|+\floor{f_{\geq t/2}}-|\s_{t-(t^*+1)} \cap A_t|\\
    &=|\s_{0}|+\floor{f_{\geq t/2}}-|\s_{0} \cap A_t|
    =\frac{\lvert N_{\geq t} \rvert \cdot k}{2n}.
\end{align*}

\noindent \textbf{Case 2:} \new{$t^*\geq 1$}.
Here, we have 
\begin{align*}
    |\s_t| &= |\s_{t-(t^*+1)}|+\floor{f_{\geq t/2}}-|\s_{t-(t^*+1)} \cap A_t|\\
    &\leq \frac{(t-(t^*+1)-1+ |N_{\geq t-(t^*+1) }|  )\cdot k}{2n} +\frac{|N_{\geq t}|\cdot k}{2n}
    \leq \frac{(t-1+ |N_{\geq t}|  )\cdot k}{2n},
\end{align*}
where the third inequality follows from the fact that $|N_{\geq t-(t^*+1) }| \leq t-(t-(t^*+1))$, as $t$ is not in $N(S^*_{t-(t^*+1)})$.
\end{proof}
As Rounds 1 and 2 of \Cref{alg:ind-fair-DUE} are symmetric, with similar arguments, we can show the following lemma. \new{The proof can be found in \Cref{app:proof_lessklemmaroundtwo}.}
\begin{restatable}{lemma}{lessklemmaroundtwo}
\label{lem:IR-VI-2}
$|\hat{\s_i}|  \leq  \frac{((i-1)+ |N_{< n-i+1}|)\cdot k}{2n}$
for all $i \in [n]$.
\end{restatable}

From \Cref{lem:IR-VI-1}, for $i=n$, we have $|N_{\ge n}|\leq 1$, and hence
$|\s_{n}|\leq \frac{((n-1)+1)\cdot k}{2n}\leq k/2$.
From \Cref{lem:IR-VI-2}, for $i=n$, we have $|N_{< 1}|=0$, and hence
$|\hat{\s_{n}}|\leq \frac{((n-1)+0)\cdot k}{2n}< k/2$.
Thus, $|W|=|\s_{n}|+|\hat{\s}_{n}|<k$.

\smallskip

Lastly, we show that $S^*_i$, and thus $f_i$, can be computed in polynomial time.
For this, we employ \Cref{Cor:shape-of-supp}, i.e., the fact that $N(S^*_i)$ forms an interval of voters that includes $i$. We consider all such intervals and for each of them calculate the maximum subset of candidates that the voters in this interval deserve due to their size.

\begin{algorithm}[t]
	\caption{Finding $f_i$ and $S^*_i$}\label{alg:find-fi}
	\begin{algorithmic}[1]
	    \State $f_i\gets 0$
	    \State $S^*_i \gets \emptyset$
        \For {$\xl=1$ to $i$}
			\For {$\xr=i$ to $n$}
			    \State $S \gets \emptyset$
			    \For {$j=1$ to $m$}
			        \If{$  \{\xl,\dots,\xr\} \subseteq N(\{c_j\}$) }		
			            \State $S \gets S \cup \{c_j\}$	
			        \EndIf
			    \EndFor
			    \State $\ell^*\gets \argmax_{\ell \in \mathbb{N}}\{\ell: \xr-\xl+1\geq \ell \cdot n/k\}$

			    \If {$\min\{\ell^*, |S|\}>f_i$}
                    \State $S^*_i\gets$ an arbitrary subset of $S$ of size $\min\{\ell^*, |S|\}$
                      \State $f_i\gets |S^*_i|$
			    \EndIf
			\EndFor
		\EndFor
	\\ \Return $f_i$, $S^*_i$
	\end{algorithmic}
\end{algorithm}

\begin{restatable}{lemma}{findfi}\label{lemm:find-fi}
For any voter $i$, $f_i$, and  $S^*_i$ can be computed in polynomial time. 
\end{restatable}

\vspace{-0.5cm}

\begin{proof}
\renewcommand{\qedsymbol}{\diamond}
We show that \Cref{alg:find-fi} correctly computes both $f_i$ and $S^*_i$. 
For each interval $[\xl,\xr]$ where $\xl\leq i$ and $i \leq \xr$, the algorithm finds the maximum number of candidates that this interval is eligible to elect, denoted by $\ell^*$.  Moreover, $S \subseteq \cC$ denotes the set of candidates that is approved by all the voters in the interval. The algorithm calculates the  maximum subset of  $S$ that can be elected by the voters in the interval as $\min\{\ell^*,|S|\}$.  Then, $S^*_i$ and $f_i$  are updated properly by assigning them the biggest subset and the size of it, respectively, that an interval of voters can elect.   

Assume for contradiction that the algorithm returns a subset $S'$ with $|S'|< |S^*_i|$. This means that there is an interval  $[\xl,\xr]$ of voters that can elect  $S^*_i$. Thus, when the algorithm considers this interval, it would return a subset of size at least $|S^*_i|$, a contradiction. 
\end{proof}
\noindent
This concludes the proof of 
\Cref{theor:IR_VI}.
\end{proof}

Further, we can show that the bound provided by \Cref{theor:IR_VI} is almost tight up to the additive part of $4$.

\begin{restatable}{theorem}{VIlowerbound}\label{lemm:VI-lower-bound}
For every $k\ge 3$, there exists a VI profile $A$ such that $(A,k)$ that does not admit an $(\alpha,0)$-IR committee with $\alpha<2-2/k$.
\end{restatable}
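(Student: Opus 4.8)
The plan is to exhibit an explicit VI profile containing two ``extreme'' blocks of voters that each deserve $k-1$ candidates but approve \emph{disjoint} candidate sets, and then conclude by a one-line budget count that no committee of size $k$ can serve both blocks to a factor better than $2-2/k$. Assume $k\ge 2$ and that $n$ is divisible by $k$, and set $a\coloneqq (k-1)\,n/k$. Take $\cC=L\cup R$ with $L=\{l_1,\dots,l_{k-1}\}$ and $R=\{r_1,\dots,r_{k-1}\}$, order the voters as $1,\dots,n$ (this order will witness VI), let every candidate in $L$ be approved exactly by the voters in $[1,a]$, and let every candidate in $R$ be approved exactly by the voters in $[\,n-a+1,n\,]=[\,n/k+1,\,n\,]$. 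Then the voters in $[1,n/k]$ approve only $L$, the voters in $[(k-1)n/k+1,\,n]$ approve only $R$, and the middle voters approve all of $L\cup R$.

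First I would verify that the profile is VI: by construction each candidate's set of supporters is one of the two intervals $[1,a]$ or $[n/k+1,n]$ of the order $(1,\dots,n)$, which is exactly the VI requirement. Next I would pin down the entitlements of the two extreme blocks. A left-only voter $i\in[1,n/k]$ approves precisely the $k-1$ candidates of $L$, and $N(L)=[1,a]$ has size $a=(k-1)\,n/k=|L|\cdot n/k$, so $S=L$ certifies $f_i\ge k-1$; since $|A_i|=k-1$ this forces $f_i=k-1$ with $S_i^*=L$. Symmetrically every right-only voter has $f_i=k-1$ with $S_i^*=R$. I would stress that the middle voters' $f$-values are never used in the argument, so I need not analyze them (though one checks $f_i=k-1$ there as well, since $\bigcap_{i}A_i=L\cap R=\emptyset$ means no $k$-subset of candidates is approved by all $n$ voters).

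For the counting step, let $\s$ be any size-$k$ committee providing $(\alpha,0)$-IR. Applying the defining inequality $\alpha\,|\s\cap A_i|\ge f_i$ to a single left-only voter gives $\alpha\,|\s\cap L|\ge k-1$, and to a single right-only voter gives $\alpha\,|\s\cap R|\ge k-1$. Summing, and using $L\cap R=\emptyset$ so that $|\s\cap L|+|\s\cap R|=|\s\cap(L\cup R)|\le|\s|=k$, I obtain $\alpha k\ge 2(k-1)$, i.e.\ $\alpha\ge 2-2/k$. Hence no $(\alpha,0)$-IR committee exists for any $\alpha<2-2/k$, which is the claim.

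The one genuinely delicate point, which is the main obstacle, is that the two entitlement intervals each have size $(k-1)n/k>n/2$ and are therefore \emph{forced} to overlap in $[n]$; I must make sure the resulting middle voters, who approve all of $L\cup R$, do not inflate anyone's $f_i$ up to $k$, since by the remark following \Cref{thm:no-beta-approx} a voter with $f_i=k$ would make IR achievable and could collapse the bound. This is exactly what the observation $\bigcap_i A_i=\emptyset$ rules out: no set of $k$ candidates has support $n$, so every $f_i$ equals $k-1$. Everything else in the write-up is the routine verification of interval shapes and the displayed inequalities.
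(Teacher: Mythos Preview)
Your proof is correct and follows essentially the same approach as the paper: the paper's construction also has $m=2(k-1)$ candidates split into two disjoint halves approved by two ``extreme'' voters with $f$-value $k-1$ (with all middle voters approving everything), and the bound $\alpha\ge 2-2/k$ falls out of the same budget count $|W\cap A_1|+|W\cap A_n|\le k$. The only cosmetic difference is that the paper uses a single extreme voter on each side rather than a block of $n/k$ of them.
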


\vspace{-0.5cm}

\begin{proof}
Fix $k\ge 3$ and consider the following instance with $n\geq k$ and $m =2(k-1)$.  
All voters \mbox{$i \in [2,n-1]$} approve all the candidates, while $A_1=\{c_1,\dots,c_{k-1}\}$ and $A_n=\{c_{k},\dots,c_m\} $. 
Notice that this profile is VI. Indeed, if we order the voters as $1,2,\dots, n$, then the voters that approve each candidate  form an interval  of the ordering. Now, we  see that  $f_1=f_n=k-1$, but  for each $\s \subseteq \cC$ with $|\s| \leq k$, 
 either $|A_1 \cap \s|\leq k/2$ or $|A_n \cap \s|\leq k/2$.
\end{proof}

In \Cref{app:inapr-rule-VI}, we show that all common ABC rules may fail to return a committee that provides $(2,4)$-IR for VI preferences. 

Beyond VI and CI, many other domain restrictions have been studied in the literature. In \Cref{app:fur_dom_res}, we provide lower and upper bounds for $(\alpha, \beta)$-IR  for all domain restrictions introduced by \citet{ElLa15a} and \citet{yang2019tree}.
Any domain that is more restrictive than VI
inherits the guarantee of a $(2,4)$-IR committee from VI\,---\,but we show that in some cases we can achieve better approximation guarantees. On the other hand, any domain that is more general than CI inherits the inapproximability from CI. In fact, we show that the same lower bound applies even in a slightly more restricted domain introduced by  \citet{yang2019tree}. 
Moreover, we show that committees satisfying IR (without approximation) always exist and can be found in polynomial time for a subclass of VI profiles.
We also determined for which of the considered domain restrictions a semi-strong JR committee is guaranteed to exist. 
For a summary of our results, see \Cref{tab:IF_guarantees_on_structured_profiles} in \Cref{app:fur_dom_res}.

\section{Experimental Results}\label{sec:experiments}

To complement our theoretical results, we performed experiments on generated approval profiles in order to check how often IR committees exist and how often they are selected by common ABC rules.

\subsection{Setup}

\new{Inspired by \citet{PPSS21a} and \citet{SFJ+22a}, we used five models to generate approval profiles:
a voter-interval model (VI), a candidate-interval model (CI), an impartial culture model (IC), the truncated urn model (Truncated), and the resampling model (Resampling). 
All generated approval profiles have $100$ voters and $50$ candidates. For each of the five models, we generated $1000$ profiles, using a variety of parameters. For each generated profile, we created 50 ABC elections, one for each $k \in [50]$. Thus, the total number of generated ABC elections is 250,000.} 

Our first model is the \emph{voter interval Euclidean model} (VI). Here, we choose a location in $[0,1]$ uniformly at random for each voter and candidate. Further, for each candidate we choose a radius according to \new{$\lvert \mathcal{N}(0, r) \rvert$ for a parameter $r$}. A candidate is approved by all voters in its radius. \new{We select $100$ instances for each $r \in \{\frac{\ell}{20}\colon \ell \in [1, 10]\}$.}

Our second model is the \emph{candidate interval Euclidean model} (CI). Here, we again choose a location in $[0,1]$ uniformly at random for each voter and candidate as well as a radius according to \new{$\lvert \mathcal{N}(0, r) \rvert$ for a parameter $r$} for each voter. A voter approves all candidates in its radius. \new{We select $100$ instances for each $r \in \{\frac{\ell}{20}\colon \ell \in [1, 10]\}$.}

In the \emph{impartial culture model} (IC), for all voters and each candidate, the candidate is approved by the voter with probability \new{$r$}.  \new{We select $100$ instances for each $r \in \{\frac{\ell}{20}\colon \ell \in [1, 10]\}$.}

\new{Finally, for the \emph{truncated urn model} and the \emph{resampling model}, we follow the approach of \citet{SFJ+22a}, who use these (and other) models to draw ``maps of elections.'' To cover a wide variety of locations on those maps, we pick several different parameter combinations for those models.\footnote{
\new{For the truncated urn model, which uses parameters $(\alpha,p) \in [0,1]^2$, we use the following 10 combinations of parameters:
$(0.1,0.5)$, $(0.9,0.5)$, $(0.3,0.3), (0.3,0.5), (0.3,0.7)$, $(0.5,0.3), (0.5,0.5), (0.5,0.7)$, $(0.7,0.4)$, and  $(0.7,0.6)$.
For the resampling model with parameters $(\phi,p) \in [0,1]^2$, we use the same set of parameter combinations.} 
}
}

We consider the ABC rules AV, PAV, seq-PAV, Greedy Monroe, MES, seq-Phragmén, and sequential Chamberlin--Courant (seq-CC).

\begin{figure*}[t]
\begin{center}
   \includegraphics[scale = 0.6]{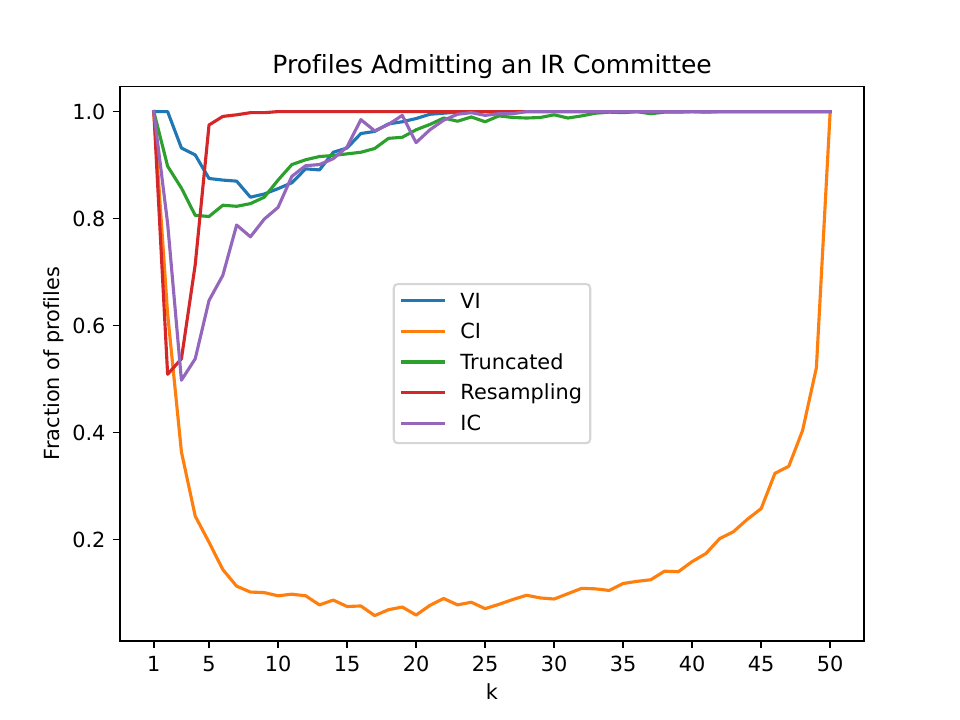}   
\end{center}
     \caption{The ratio of generated profiles that admit an IR committee.}
     \label{fig:experiements1}
 \end{figure*}

\subsection{Results}
First, we studied how often the generated approval profiles admit an IR committee. The results are shown in \Cref{fig:experiements1}. We found that IR committees exist quite often, especially for larger values of $k$. \new{In particular, profiles generated by the VI model or by the truncated urn model admit IR committees in more than 80\% of instances, for all values of $k$.} On the other hand, profiles generated by the CI model \new{rarely} admit IR committees. This striking contrast between VI and CI, which is reminiscent of our theoretical results in \Cref{sec:dom_rest}, can be explained with a feature of the preference generation model: 
Due to the way we generate CI preferences, 
\new{many voters tend to have rather large approval sets.} 
These voters approving many candidates are then part of multiple cohesive groups, not all of which can be represented in an IR manner. (A similar situation can be observed in the profile constructed in the proof of \Cref{thm:no-beta-approx}.)

\begin{figure*}[t]
\begin{center}
   \includegraphics[scale = 0.6]{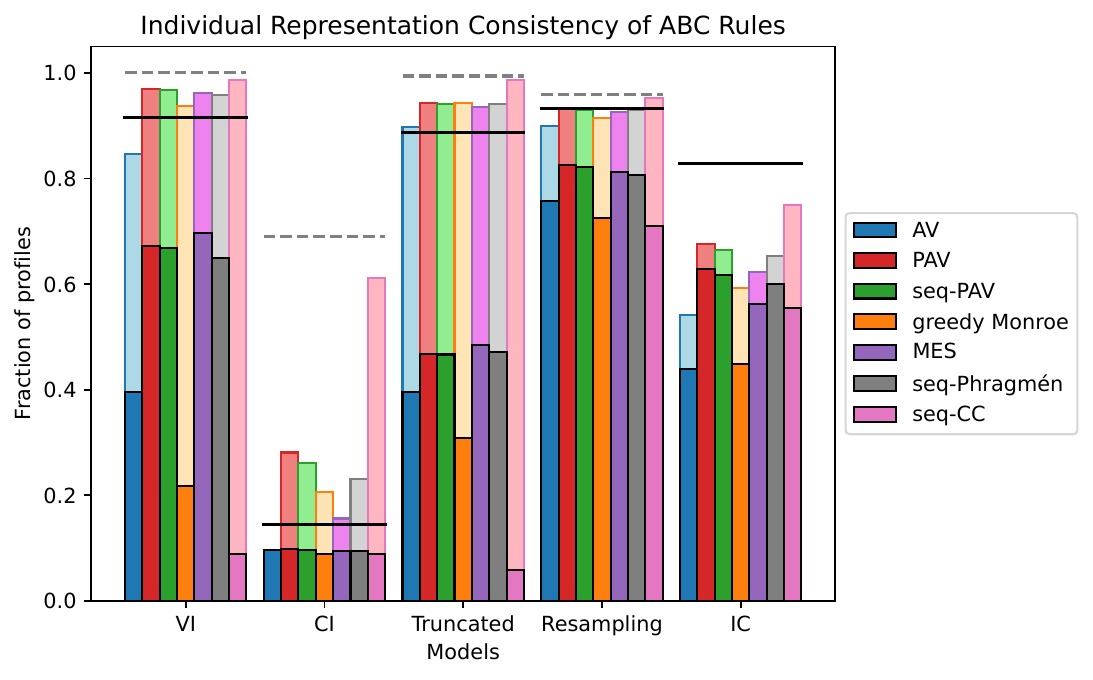}
\end{center}
     \caption{For each model and each voting rule, the bold colored part of the bar indicates the ratio of instances the rule returned an IR committee, while the pale-colored part indicates the same ratio for semi-strong JR, averaged over all values $k$ with $2 \le k \le 20$. For each model, the black line indicates the fraction of instances admitting an IR committee, while the gray dashed line indicates the ratio of instances admitting a semi-strong JR committee.}
     \label{fig:experiements2} 
 \end{figure*} 
 
Second, we studied how often different ABC rules select a committee providing IR (or semi-strong JR). In order not to dilute our results, we restricted $k$ to the ``interesting'' range between $2$ and $20$. The results are shown in \Cref{fig:experiements2}. Of course, the fraction of profiles for which a rule selects an IR (or semi-strong JR) committee is upper-bounded by the fraction of profiles that admit such a committee. For each model, the latter fraction is depicted in the graph as a solid black line for IR, and a dashed gray line for semi-strong JR. 
While no rule manages to find an IR committee every time one exists, the rules PAV, sequential PAV, MES, and sequential Phragmén select IR committees often. 
For the small fraction of CI profiles that admit an IR committee, all considered rules do a good job in finding one.  
Since seq-CC greedily optimizes the amount of voters that are represented at least once, it finds a committee providing semi-strong JR in almost all profiles that admit one. But as the rule does not aim at representing voters more than once, it rarely produces IR committees. 
In the profiles generated by the IC model, IR often coincides with semi-strong JR (for $k\le 20$) because almost all non-zero $f_i$-values are $1$. This is in line with the effect noticed by \citet{BFKN19}, whose experiments showed that EJR and JR are very likely to coincide under IC.

\section{Discussion}

Based on the observations that 
common axioms in approval-based committee voting do not address the representation of individual voters, and that 
common voting rules sometimes unfairly distinguish between such voters, we formalize \textit{individual representation} (IR) as a requirement for committees. 
We find that all common voting rules fail to select IR committees, even when these exist.
Nevertheless, for some restricted domains\,---\,most prominently, voter interval preferences\,---\,we provide polynomial-time algorithms for finding committees that provide a good approximation to IR. 
Our experimental results suggest that IR is achievable in many instances that follow somewhat realistic preferences.
It remains an open problem to find intuitive voting rules that provide (approximate) IR whenever possible.

\section*{Acknowledgments}

A short version of this paper appears in the \textit{Proceedings of the 36th AAAI Conference on Artificial Intelligence (AAAI 2022)}.
We would like to thank Nisarg Shah and the anonymous reviewers at AAAI and Social Choice and Welfare for their helpful comments.
This work was supported by the Deutsche Forschungsgemeinschaft under grant BR~4744/2-1 and the Graduiertenkolleg ``Facets of Complexity'' (GRK~2434).

\bibliographystyle{plainnat}
\bibliography{bibliography,abb,algo}

\begin{thebibliography}{27}
\providecommand{\natexlab}[1]{#1}
\providecommand{\url}[1]{\texttt{#1}}
\expandafter\ifx\csname urlstyle\endcsname\relax
  \providecommand{\doi}[1]{doi: #1}\else
  \providecommand{\doi}{doi: \begingroup \urlstyle{rm}\Url}\fi

\bibitem[Aziz et~al.(2017)Aziz, Brill, Conitzer, Elkind, Freeman, and Walsh]{ABC+16a}
H.~Aziz, M.~Brill, V.~Conitzer, E.~Elkind, R.~Freeman, and T.~Walsh.
\newblock Justified representation in approval-based committee voting.
\newblock \emph{Social Choice and Welfare}, 48\penalty0 (2):\penalty0 461--485, 2017.

\bibitem[Balinski and Young(1982)]{BaYo82a}
M.~L. Balinski and H.~P. Young.
\newblock \emph{Fair Representation: {M}eeting the Ideal of One Man, One Vote}.
\newblock Yale University Press, 1982.

\bibitem[Boehmer et~al.(2024)Boehmer, Brill, Cevallos, Gehrlein, S{\'a}nchez-Fern{\'a}ndez, and Schmidt-Kraepelin]{BBC+24a}
N.~Boehmer, M.~Brill, A.~Cevallos, J.~Gehrlein, L.~S{\'a}nchez-Fern{\'a}ndez, and U.~Schmidt-Kraepelin.
\newblock Approval-based committee voting in practice: A case study of (over-)representation in the {P}olkadot blockchain.
\newblock In \emph{Proceedings of the 38th AAAI Conference on Artificial Intelligence (AAAI)}, pages 9519--9527, 2024.

\bibitem[Brams et~al.(2007)Brams, Kilgour, and Sanver]{BKS07a}
S.~J. Brams, D.~M. Kilgour, and M.~R. Sanver.
\newblock A minimax procedure for electing committees.
\newblock \emph{Public Choice}, 132:\penalty0 401--420, 2007.

\bibitem[Bredereck et~al.(2019)Bredereck, Faliszewski, Kaczmarczyk, and Niedermeier]{BFKN19}
R.~Bredereck, P.~Faliszewski, A.~Kaczmarczyk, and R.~Niedermeier.
\newblock An experimental view on committees providing justified representation.
\newblock In \emph{Proceedings of the 28th {International Joint Conference on Artificial Intelligence (IJCAI)}}, pages 109--115, 2019.

\bibitem[Brill and Peters(2023)]{BrPe23a}
M.~Brill and J.~Peters.
\newblock Robust and verifiable proportionality axioms for multiwinner voting.
\newblock In \emph{Proceedings of the 24th ACM Conference on Economics and Computation (ACM-EC)}, page 301. ACM, 2023.
\newblock Full version arXiv:2302.01989 [cs.GT].

\bibitem[Brill et~al.(2023)Brill, Dindar, Israel, Lang, Peters, and Schmidt-Kraepelin]{BDI+23a}
M.~Brill, H.~Dindar, J.~Israel, J.~Lang, J.~Peters, and U.~Schmidt-Kraepelin.
\newblock Multiwinner voting with possibly unavailable candidates.
\newblock In \emph{Proceedings of the 37th AAAI Conference on Artificial Intelligence (AAAI)}, pages 5532--5539, 2023.

\bibitem[Brill et~al.(2024{\natexlab{a}})Brill, Freeman, Janson, and Lackner]{BFJL24a}
M.~Brill, R.~Freeman, S.~Janson, and M.~Lackner.
\newblock Phragm\'{e}n's voting methods and justified representation.
\newblock \emph{Mathematical Programming}, 203\penalty0 (1--2):\penalty0 47--76, 2024{\natexlab{a}}.

\bibitem[Brill et~al.(2024{\natexlab{b}})Brill, G{\"o}lz, Peters, Schmidt-Kraepelin, and Wilker]{BGP+24a}
M.~Brill, P.~G{\"o}lz, D.~Peters, U.~Schmidt-Kraepelin, and K.~Wilker.
\newblock Approval-based apportionment.
\newblock \emph{Mathematical Programming}, 203\penalty0 (1--2):\penalty0 77--105, 2024{\natexlab{b}}.

\bibitem[Elkind and Lackner(2015)]{ElLa15a}
E.~Elkind and M.~Lackner.
\newblock Structure in dichotomous preferences.
\newblock In \emph{Proceedings of the 24th International Joint Conference on Artificial Intelligence (IJCAI)}, pages 2019--2025. AAAI Press, 2015.

\bibitem[Elkind et~al.(2017)Elkind, Lackner, and Peters]{ELP17a}
E.~Elkind, M.~Lackner, and D.~Peters.
\newblock Structured preferences.
\newblock In U.~Endriss, editor, \emph{Trends in Computational Social Choice}, chapter~10. 2017.

\bibitem[Gangl et~al.(2019)Gangl, Maly, Lackner, and Woltran]{medical}
C.~Gangl, J.~Maly, M.~Lackner, and S.~Woltran.
\newblock Aggregating expert opinions in support of medical diagnostic decision-making.
\newblock In \emph{Proceedings of the 11th International Workshop on Knowledge Representation for Health Care (KR4HC-2019)}, 2019.

\bibitem[Garey and Johnson(1979)]{GaJo79a}
M.~R. Garey and D.~S. Johnson.
\newblock \emph{Computers and Intractability: A Guide to the Theory of NP-Completeness}.
\newblock W. H. Freeman, 1979.

\bibitem[Jung et~al.(2020)Jung, Kannan, and Lutz]{clust-fair-2020}
C.~Jung, S.~Kannan, and N.~Lutz.
\newblock A center in your neighborhood: Fairness in facility location.
\newblock In \emph{Proceedings of the Symposium on Foundations of Responsible Computing (FORC)}, page 5:1–5:15, 2020.

\bibitem[Lackner and Skowron(2022)]{LaSk22a}
M.~Lackner and P.~Skowron.
\newblock \emph{Multi-Winner Voting with Approval Preferences}.
\newblock Springer, 2022.

\bibitem[Peters(2018)]{Pete18b}
D.~Peters.
\newblock Single-peakedness and total unimodularity: New polynomial-time algorithms for multi-winner elections.
\newblock In \emph{Proceedings of the 32nd AAAI Conference on Artificial Intelligence (AAAI)}, pages 1169 -- 1176, 2018.

\bibitem[Peters and Skowron(2020)]{PeSk20b}
D.~Peters and P.~Skowron.
\newblock Proportionality and the limits of welfarism.
\newblock In \emph{Proceedings of the 21st ACM Conference on Economics and Computation (ACM-EC)}, pages 793--794. ACM, 2020.
\newblock Full version arXiv:1911.11747 [cs.GT].

\bibitem[Peters et~al.(2021{\natexlab{a}})Peters, Pierczy{\'n}ski, Shah, and Skowron]{PPSS21a}
D.~Peters, G.~Pierczy{\'n}ski, N.~Shah, and P.~Skowron.
\newblock Market-based explanations of collective decisions.
\newblock In \emph{Proceedings of the 35th AAAI Conference on Artificial Intelligence (AAAI)}, pages 5656--5663, 2021{\natexlab{a}}.

\bibitem[Peters et~al.(2021{\natexlab{b}})Peters, Pierczy{\'n}ski, and Skowron]{PPS21a}
D.~Peters, G.~Pierczy{\'n}ski, and P.~Skowron.
\newblock Proportional participatory budgeting with additive utilities.
\newblock In \emph{Advances in Neural Information Processing Systems}, volume~34, pages 12726--12737, 2021{\natexlab{b}}.

\bibitem[Pierczy{\'n}ski and Skowron(2022)]{PiSk22a}
G.~Pierczy{\'n}ski and P.~Skowron.
\newblock Core-stable committees under restricted domains.
\newblock In \emph{Proceedings of the 18th International Workshop on Internet and Network Economics (WINE)}, pages 311--329, 2022.

\bibitem[Pukelsheim(2014)]{Puke14a}
F.~Pukelsheim.
\newblock \emph{Proportional Representation: Apportionment Methods and Their Applications}.
\newblock Springer, 2014.

\bibitem[S{\'a}nchez-Fern{\'a}ndez et~al.(2017)S{\'a}nchez-Fern{\'a}ndez, Elkind, Lackner, Fern{\'a}ndez, Fisteus, {Basanta Val}, and Skowron]{SFF+17aNEW}
L.~S{\'a}nchez-Fern{\'a}ndez, E.~Elkind, M.~Lackner, N.~Fern{\'a}ndez, J.~A. Fisteus, P.~{Basanta Val}, and P.~Skowron.
\newblock Proportional justified representation.
\newblock In \emph{Proceedings of the 31st AAAI Conference on Artificial Intelligence (AAAI)}, pages 670--676. AAAI Press, 2017.

\bibitem[Skowron(2021)]{Skow21a}
P.~Skowron.
\newblock Proportionality degree of multiwinner rules.
\newblock In \emph{Proceedings of the 22nd ACM Conference on Economics and Computation (ACM-EC)}, pages 820--840. ACM, 2021.

\bibitem[Skowron et~al.(2017)Skowron, Lackner, Brill, Peters, and Elkind]{SLB+17a}
P.~Skowron, M.~Lackner, M.~Brill, D.~Peters, and E.~Elkind.
\newblock Proportional rankings.
\newblock In \emph{Proceedings of the 26th International Joint Conference on Artificial Intelligence (IJCAI)}, pages 409--415. IJCAI, 2017.

\bibitem[Szufa et~al.(2022)Szufa, Faliszewski, Janeczko, Lackner, Slinko, Sornat, and Talmon]{SFJ+22a}
S.~Szufa, P.~Faliszewski, {\L}.~Janeczko, M.~Lackner, A.~Slinko, K.~Sornat, and N.~Talmon.
\newblock How to sample approval elections?
\newblock In \emph{Proceedings of the 31st International Joint Conference on Artificial Intelligence (IJCAI)}, pages 496--502, 2022.

\bibitem[Terzopoulou et~al.(2021)Terzopoulou, Karpov, and Obraztsova]{terzopoulou2021restricted}
Z.~Terzopoulou, A.~Karpov, and S.~Obraztsova.
\newblock Restricted domains of dichotomous preferences with possibly incomplete information.
\newblock In \emph{Proceedings of the 35th AAAI Conference on Artificial Intelligence (AAAI)}, pages 5726--5733. AAAI Press, 2021.

\bibitem[Yang(2019)]{yang2019tree}
Y.~Yang.
\newblock On the tree representations of dichotomous preferences.
\newblock In \emph{Proceedings of the 28th {International Joint Conference on Artificial Intelligence (IJCAI)}}, pages 644--650. IJCAI, 2019.

\end{thebibliography}

\clearpage

\appendix

\section{Omitted Proofs}\label{app:missed-proofs}
\subsection{Proof of \Cref{lem:IR-VI-2}}\label{app:proof_lessklemmaroundtwo}
\lessklemmaroundtwo*

\vspace{-0.5cm}

\begin{proof}
We prove the lemma using induction.
For $i=1$, we have that $\hat{\s}_1=\floor{f_{<n }/2}\leq \frac{ |N_{< n}|\cdot k}{2n} $ and the statement holds.
Assume that, for all $ t' <t$, we have 
$|\hat{\s}_{t'}|\leq  \frac{((t'-1)+ |N_{<n- t'+1}|)\cdot k}{2n}$.

We show that the statement holds for $\hat{\s}_t$.
Note that 
\begin{align}
   |\hat{\s}_{t}|= |\hat{\s}_{t-1}|+ \floor{f_{<n- t+1}/2}-|\hat{\s}_{t-1}\cap A_{n-t+1}| \label{eq:VI-t-1-app}
\end{align}
as at iteration $t$, the algorithm  adds $ \floor{f_{<n- t+1}/2}-|\hat{\s}_{t-1}\cap A_{n-t+1}|$ candidates to $\hat{\s}_{t}$.
Let $t^*=\max\{r \in \{0, \ldots, t-1\}\colon n-t+1 \in N(S^*_{n-t+1+r}) \}$.  
If $t^*=0$, from \Cref{eq:VI-t-1-app} we get that
\begin{align*}
   |\hat{\s}_{t}|\leq&  \frac{((t-2)+ |N_{<n- t+2}|)\cdot k}{2n} + \floor{f_{<n- t+1}/2}-|\hat{\s}_{t-1}\cap A_{n-t+1}|
   \\
   \leq &\frac{((t-2)+ 0)\cdot k}{2n} + \floor{f_{<n- t+1}/2}-|\hat{\s}_{t-1}\cap A_{n-t+1}|\\
   \leq &\frac{((t-2)+ |N_{< n- t+1}|  )\cdot k}{2n},
\end{align*} 
where the second inequality follows from the fact that $|N_{< n-t+2}|=0$, as $n-t+1 \notin N(S^*_{n-t+2})$. Now assume that $t^*>0$.
First, using induction, we show that 
\begin{align*}
|\hat{\s}_{t}|= |\hat{\s}_{t-r}|+ \floor{f_{< n-t+1}/2}- |\hat{\s}_{t-r} \cap A_{n-t+1}|
\end{align*}
for any $r \in [1,t^*+1]$. 

For $r=1$, the claim immediately  follows from \Cref{eq:VI-t-1-app}. Assume that for all $q'<q$ it holds that $|\hat{\s}_{t}|= |\hat{\s}_{t-q'}|+ \floor{f_{<n-t+1}/2}- |\hat{\s}_{t-q'} \cap A_{n-t+1}|$.
Now, we have 
\begin{align}
    |\hat{\s}_{t-(q-1)} \cap A_{n-t+1}| \nonumber 
    = &\floor{f_{< n-t+(q-1)+1}/2} 
    - |\hat{\s}_{t-q} \cap A_{n-t+(q-1)+1}| \\
    &+ |\hat{\s}_{t-q}\cap A_{n-t+1}|\label{eq:VI-induction-app}
\end{align} 
as at iteration $t-(q-1)$ we add$\floor{f_{< n-t+(q-1)+1}/2}   - |\hat{\s}_{t-q} \cap A_{n-t+(q-1)+1}|$ candidates from $S^*_{n-t+(q-1)+1}$ to $\hat{\s}_{t-(q-1)}$, and as $n-t+1 \in N(S^*_{n-t+(q-1)+1})$, these candidates are approved by $n-t+1$, too. Then, 
\begin{align*}
|\hat{\s}_{t}|&=| \hat{\s}_{t-(q-1)}| + \floor{f_{<n-t+1}/2} -|\hat{\s}_{t-(q-1)}\cap A_{n-t+1}| \\
&= |\hat{\s}_{t-q}|+ \floor{f_{< n-t+(q-1)+1}/2} - |\hat{\s}_{t-q} \cap A_{n-t+(q-1)+1}| \\
&\phantom{= |\hat{\s}_{t-q}|} \,\, + \floor{f_{<n-t+1}/2}-|\hat{\s}_{t-(q-1)}\cap A_{n-t+1}| \\
&= |\hat{\s}_{t-q}|+ \floor{f_{<n-t+1}/2}-|\hat{\s}_{t-q}\cap A_{n-t+1}|,
\end{align*}
where the second transition follows since at iteration $t-(q-1)$, $\floor{f_{< n-t+(q-1)+1}/2}   - |\hat{\s}_{t-q} \cap A_{n-t+(q-1)+1}|$ candidates are added to $\hat{\s}_{t-(q-1)}$, and 
the third transition follows from \Cref{eq:VI-induction-app}.

Now, we distinguish two cases.

\smallskip

\noindent \textbf{Case 1:} $t^*=t-1$
Here, we have 
\begin{align*}
    |\hat{\s}_t|&=|\hat{\s}_{t-(t^*+1)}|+\floor{f_{<n-t+1 /2}} -|\hat{\s}_{t-(t^*+1)} \cap A_{n-t+1}|\\
    &=|\hat{\s}_{0}|+\floor{f_{<n-t+1 }/2}-|\hat{\s}_{0} \cap A_{n-t+1}| =\frac{\lvert N_{<n- t+1} \rvert \cdot k}{2n}.
\end{align*}

\noindent \textbf{Case 2:} $t^*\geq 1$
Here, we have 
\begin{align*}
    |\hat{\s}_t| &= |\hat{\s}_{t-(t^*+1)}|+\floor{f_{<n-t+1}/2} - |\hat{\s}_{t-(t^*+1)} \cap A_{n-t+1}|\\
    &\leq \frac{(t-(t^*+1)-1+ |N_{<n- t+(t^*+1)+1 }|  )\cdot k}{2n} + \frac{|N_{<n-t+1}|\cdot k}{2n}\\
    &\leq \frac{(t-2+ |N_{<n- t+1}|  )\cdot k}{2n},
\end{align*}
where the third inequality follows from the fact that $|N_{<n- t+(t^*+1)+1 }| \leq t^*$, as $n-t+1$ is not in $N(S^*_{n-t+(t^*+1)+1})$.
\end{proof}

\section{Relation to Perfect Representation}\label{app:rel_perfect_repr}

Consider an approval profile $A$ and a committee size $k$ such that $k$ divides the number of voters $n$. A committee $W$ of size $k$ provides \emph{perfect representation (PR)} \citep{SFF+17aNEW} if it is possible to partition the electorate $\N$ into $k$ pairwise disjoint subsets $\N_1, \ldots, \N_k$ of size $\frac{n}{k}$ each, and assign a distinct candidate from $W$ to each of the subsets in such a way that for each subset all the voters in the subset approve of the assigned candidate.

It is known that not all ABC voting instances where $k$ divides $n$ admit a committee providing PR \cite{SFF+17aNEW}. Thus, in the following, we will call an ABC election a \emph{PR-instance} if it admits a PR committee. 

\begin{restatable}{proposition}{prssjr}
On PR-instances, every committee providing perfect representation also provides semi-strong JR, but not the other way around.
\label{thm:ssJRvsPR}
\end{restatable}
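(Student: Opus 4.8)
The statement to prove is Proposition~\ref{thm:ssJRvsPR}: on PR-instances, every committee providing perfect representation also provides semi-strong JR, but not the converse.

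The plan is to prove the two directions separately. For the forward direction (PR implies semi-strong JR), I would take a committee $W$ providing perfect representation and an arbitrary voter $i$ with $f_i \geq 1$, and show that $|W \cap A_i| \geq 1$. By the definition of perfect representation, there is a partition $N_1, \ldots, N_k$ of the electorate into blocks of size $n/k$, each assigned a distinct candidate from $W$ that every voter in the block approves. Voter $i$ belongs to exactly one block $N_j$, and the candidate $c$ assigned to $N_j$ lies in $W$ and is approved by $i$ (i.e., $c \in A_i$). Hence $c \in W \cap A_i$, giving $|W \cap A_i| \geq 1$. Since this holds for every voter, in particular for those with $f_i \geq 1$, the committee provides semi-strong JR. Note this argument does not even require the hypothesis $f_i \geq 1$: perfect representation represents \emph{every} voter at least once.

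For the converse direction (semi-strong JR does not imply PR), I would exhibit a single PR-instance together with a committee that provides semi-strong JR but fails perfect representation. The cleanest approach is to find an instance that admits a PR committee (so that it qualifies as a PR-instance) but where some \emph{other} committee satisfies semi-strong JR while violating PR. A small example suffices: take a profile where $k$ divides $n$, where a perfect-representation committee exists, but where one can also pick a committee $W$ that happens to give every voter with positive $f_i$-value at least one approved candidate, yet cannot be split into $k$ disjoint blocks of size $n/k$ each covered by a distinct committee member. The easiest obstruction to PR is an imbalance: some committee member is approved by strictly more than $n/k$ voters while another is approved by too few, so no valid equipartition exists even though every voter is covered.

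The main obstacle is constructing the counterexample for the converse so that it is simultaneously a valid PR-instance (a perfect representation committee must genuinely exist) and admits a distinct semi-strong JR committee that provably fails PR. I would verify the PR-instance property by explicitly displaying one equipartition-compatible committee, and verify the failure of PR for the second committee by a short counting argument showing no equipartition into $n/k$-sized blocks with distinct approved representatives is possible. The forward direction is routine; essentially all the care goes into checking these two conditions on the small profile.
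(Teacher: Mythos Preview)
Your proposal is correct and follows essentially the same approach as the paper: the forward direction is argued identically (every voter is covered by the candidate assigned to its block), and for the converse the paper likewise exhibits a small PR-instance ($n=6$, $k=3$) with a semi-strong JR committee whose failure of PR is witnessed by a candidate approved by fewer than $n/k$ voters. You have only left the concrete example to be filled in, which is routine.
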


\vspace{-0.5cm}

\begin{proof}
Assume a PR-instance is given together with a committee $\s$ that provides perfect representation. By definition, $|A_i \cap \s| \geq 1$ for every voter $i \in \N$ . Thus, $W$ provides semi-strong JR.
As a counterexample for the other direction, consider the following instance with $n=6$ and $k=3$.
\[   A_1 = A_2 = \quickset{c_1, c_4}, \quad
A_3 = \quickset{c_2,c_4,c_6}, \quad
A_4 = \quickset{c_2,c_5}, \quad
A_5 = A_6 = \quickset{c_3,c_5}  \]
The committee $\s = \quickset{c_4,c_5,c_6}$ provides semi-strong JR but not perfect representation (whereas $\quickset{c_1,c_2,c_3}$ provides both).
\end{proof}

Note that \citet[Theorem~4]{SFF+17aNEW} establish that on PR-instances, perfect representation also implies PJR. It follows that, whereas semi-strong JR and PJR are incompatible in general (as we proved in \Cref{sec:ind_rep}), every PR-instance admits a committee that satisfies semi-strong JR, PJR, and perfect representation.

\begin{restatable}{proposition}{pr_ir}
There are PR-instances that do not admit an IR committee. Moreover, there are PR-instances where an IR committee exists but does not provide perfect representation.
\end{restatable}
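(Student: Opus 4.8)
The statement to prove has two parts: (1) there exist PR-instances admitting no IR committee, and (2) there exist PR-instances where an IR committee exists but provides no perfect representation. The plan is to construct explicit small profiles for each claim, exploiting the tension already demonstrated in the paper between individual representation and the ``one representative suffices'' philosophy underlying PR.

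For part (1), I would start from the incompatibility already established in \Cref{thm:ssjrejr}, which rules out simultaneous semi-strong JR and EJR. Since every IR committee also provides EJR (as noted after the Individual Representation definition) and IR implies semi-strong JR, the profile witnessing \Cref{thm:ssjrejr} is a natural candidate: in that instance no committee can provide IR at all. The task therefore reduces to verifying that this instance (or a small modification of it) is a \emph{PR-instance}, i.e.\ that $k$ divides $n$ and that some committee provides perfect representation. With $n=8$ and $k=4$ we have $k \mid n$, so I would exhibit a partition of the eight voters into four blocks of size two together with an assignment of distinct approved candidates to each block. The main thing to check is that the approval sets allow such a clean assignment; if the exact profile from \Cref{thm:ssjrejr} does not admit perfect representation, I would adjust the singleton-approving voters slightly (adding a candidate approved by a suitable pair) while preserving that $f_6=f_7=f_8=1$ forces the three singleton candidates into any semi-strong JR committee, which in turn blocks EJR for the four identical voters.

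For part (2), I would reuse the instance from the proof of \Cref{thm:ircore} (the profile in \Cref{ex:tencandidates} with $n=12$, $k=6$), or a similarly structured profile, where a unique IR committee $\s = \set{c_1,c_2,c_3,c_4,c_9,c_{10}}$ is known to exist. Here I must check two things: that $k$ divides $n$ (indeed $6 \mid 12$) so the instance qualifies as a PR-instance, that some committee provides perfect representation (establishing it \emph{is} a PR-instance), and that the IR committee $\s$ itself fails perfect representation. For the latter, I would argue that perfect representation would require partitioning the twelve voters into six pairs each sharing an approved candidate in $\s$; using the structure of the approval sets (in particular the fact noted in \Cref{ex:tencandidates} that the group $\set{5,\dots,12}$ would deviate, and the asymmetry in how $\s$ covers the electorate) I would show no such pairing exists for $\s$, while a different committee does admit one.

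The main obstacle will be confirming the existence of a perfect-representation committee for each constructed instance, since perfect representation is a global combinatorial constraint (a perfect matching between voter-blocks and candidates) rather than a local one. For both instances I would make this concrete by explicitly displaying the partition of $\N$ into blocks of size $n/k$ and the assigned distinct candidates, and then separately verifying the IR-side claim by the arguments already available from \Cref{thm:ssjrejr} and \Cref{thm:ircore}. Once the perfect-representation witness is pinned down, the remaining verifications are routine inequality checks on the $f_i$ values and on the intersections $\lvert A_i \cap \s \rvert$.
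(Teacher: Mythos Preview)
Both halves of your plan have genuine gaps.

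\textbf{Part (1).} The instance from \Cref{thm:ssjrejr} is \emph{not} a PR-instance: voters $7$ and~$8$ approve only $c_4$ and $c_5$ respectively, and the only other voter approving either candidate is voter~$5$, who can be placed in just one block of size~$2$. Your proposed fix---adding a shared candidate to a pair of singleton voters while ``preserving that $f_6=f_7=f_8=1$ forces the three singleton candidates''---cannot work: once voter~$7$ (say) approves an additional candidate $c_6$, a committee containing $c_6$ already represents voter~$7$, so $c_4$ is no longer forced. Concretely, with $A_7=\{c_4,c_6\}$ and $A_8=\{c_5,c_6\}$ the committee $\{c_1,c_2,c_3,c_6\}$ becomes IR, destroying the non-existence claim. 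The paper instead builds a fresh instance with \emph{four} singleton voters $1$--$4$ (forcing $c_1,\dots,c_4$ and thereby filling all $k=4$ seats) and four further voters $5$--$8$ who (i) each partner one singleton for PR and (ii) jointly form a $2$-cohesive group on $\{c_5,c_6\}$ that IR cannot serve.

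\textbf{Part (2).} In the profile of \Cref{ex:tencandidates} the unique IR committee $\s=\{c_1,c_2,c_3,c_4,c_9,c_{10}\}$ \emph{does} provide perfect representation: take the blocks $\{1,5\}\!\to\! c_1$, $\{2,6\}\!\to\! c_2$, $\{3,7\}\!\to\! c_3$, $\{4,8\}\!\to\! c_4$, $\{9,10\}\!\to\! c_9$, $\{11,12\}\!\to\! c_{10}$. So the ``no pairing exists'' argument you intend to give would fail. The paper instead reuses the small instance from \Cref{thm:ssJRvsPR} ($n=6$, $k=3$): there all $f_i=1$, the committee $\{c_4,c_5,c_6\}$ is IR, yet it fails PR because $c_6$ is approved by voter~$3$ alone and hence cannot be assigned to any block of size~$2$; meanwhile $\{c_1,c_2,c_3\}$ witnesses that the instance is a PR-instance.
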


\vspace{-0.5cm}

\begin{proof}
Regarding the first claim consider the following instance with $n=8$ and $k=4$. 
\begin{align*}
    A_1 &= \quickset{c_1} &&A_5 = \quickset{c_1,c_5,c_6} \\
    A_2 &= \quickset{c_2} &&A_6 = \quickset{c_2,c_5,c_6} \\
    A_3 &= \quickset{c_3} &&A_7 = \quickset{c_3,c_5,c_6} \\
    A_4 &= \quickset{c_4} &&A_8 = \quickset{c_4,c_5,c_6} 
\end{align*}
Here $\quickset{c_1, \ldots, c_4}$ provides perfect representation (and semi-strong JR) but not IR (which is not achievable in this instance).
Regarding the second claim, again consider the instance from the proof of \Cref{thm:ssJRvsPR} with $n=6$ and $k=3$. Here, the committee $\quickset{c_4,c_5,c_6}$ provides IR, but not perfect representation.
\end{proof}

\section{Counterexamples for IR-Consistency}
\label{app:consistency}

Here we provide further examples showing that all common ABC rules violate IR-consistency. Note that all committee monotone rules (including AV, SAV, sequential PAV, and sequential Phragmén) as well as MES and PAV were already ruled out to satisfy IR-consistency in \Cref{sec:ir_consistency}.

\begin{example}
Consider the following profile with $n=16$ voters and assume $k = 4$:
\begin{align*}
    A_1 &= A_2 = A_3 = \quickset{ c_1 }  &&A_4 = \quickset{ c_1, c_5 } \\
    A_5 &= A_6 = A_7 = \quickset{ c_2 }  &&A_8 = \quickset{ c_2, c_5 } \\
    A_9 &= A_{10} = A_{11} = \quickset{ c_3 }  &&A_{12} = \quickset{ c_3, c_5 } \\
    A_{13} &= A_{14} = A_{15} = \quickset{ c_4 }  &&A_{16} = \quickset{ c_5 }.
\end{align*}
Here we have $f_i = 1$ for all voters $i \in N \setminus \quickset{13, 14, 15}$ and thus the only committee providing individual representation is $\s = \{ c_1,c_2,c_3,c_5 \}$. Chamberlin-Courant-AV (CCAV), Monroe-AV and PAV with the weight-vector $(1, \frac{1}{n}, \frac{1}{n^2}, \ldots )$, 
which provide an IR committee in the example of \Cref{ssJRvsComMono}, choose $\{ c_1,c_2,c_3,c_4 \}$ and thus fail individual representation.
\end{example}

The only two remaining rules from Table~4.1 in the survey by \citet{LaSk22a} are leximin-Phragmén \citep{BFJL24a} (referred to as ``leximax-Phragmén'' by \citet{LaSk22a}) and Minimax-AV \citep{BKS07a}.
\begin{example}
Consider the following profile with $n=6$ voters and let $k=3$:
    \begin{align*}
        A_1 &= \quickset{c_1} &&A_3 = \quickset{c_1, \ldots, c_5} &&A_5 = \quickset{c_4,c_5,c_6} \\
        A_2 &= \quickset{c_2} &&A_4 = \quickset{c_3,\ldots,c_6}   &&A_6 = \quickset{c_5,c_6}.
    \end{align*}
Here we have $f_i = 1$ for all voters $i \in N$ and thus $\quickset{c_1,c_2,c_5}$ and $\quickset{c_1,c_2,c_6}$ are the only committees providing individual representation (or semi-strong JR). It turns out that leximin-Phragm{\'e}n does not select any of these two. It is easy to see that the load distribution with the minimum maximal load for either of the two IR committees is $(\frac{2}{3}, \frac{2}{3}, \frac{2}{3}, \frac{1}{3}, \frac{1}{3}, \frac{1}{3})$. The committee $\quickset{c_1, c_4, c_5}$, however, induces a load distribution of $(\frac{3}{5}, 0, \frac{3}{5}, \frac{3}{5}, \frac{3}{5}, \frac{3}{5})$ which is lower both in terms of the maximum as well as lexicographic ordering.
\end{example}

\begin{example}\label{ex:MinimaxAV}
Consider the following profile with $n=100$ voters and let $k=2$:
    \[ 99 \times \{ c_1,c_2 \} \qquad 1 \times \{ c_3, \ldots, c_8 \}. \]
Minimax-AV (which minimises the maximum Hamming-distance among all voters to the winning committee) selects any two candidates from $\{ c_3, \ldots c_8 \}$ and none that is supported by the 99 voters. This clearly violates individual representation and semi-strong JR.
\end{example}

\section{Common ABC Rules do not Guarantee $(2,4)$-IR  for VI Preferences} 
\label{app:inapr-rule-VI}

The two examples below show that many common ABC rules are not guaranteed to return a committee that provides $(2,4)$-IR for VI preferences.
\begin{example}
Consider the following profile with $n=28$ voters and let $k=14$:
\begin{align*}
    A_1&=\quickset{c_1, \ldots, c_7}\\
    A_2=\dots = A_{14}&=\quickset{c_1, \ldots, c_{20}}\\
    A_{15}=\dots=A_{28} &= \quickset{c_{8}, \ldots, c_{20}}.
\end{align*}
Notice that $f_1=7$ and $f_i = 13$ for $i > 1$. 
Here, MES, AV, PAV, seq-PAV, rev-seq-PAV, seq-Phragm{\'e}n, SAV, and Greedy Monroe (depending on the tie breaking) all return committees $W$ where  $\quickset{c_8, \ldots, c_{20}} \subset W$ and thus only one of the candidates among $\quickset{c_1, \ldots, c_7}$ is included. Hence, they all fail to select a $(2,4)$-IR committee.
\end{example}

\begin{example}
Consider the following profile with $n=28$ voters and let $k=14$:
\begin{align*}
    A_1= \dots = A_{14} &= \quickset{c_1, \ldots, c_7} \\
    A_{14+i} &= \quickset{c_{5+i}}, \qquad \forall i \in \quickset{1,\dots,14}.
\end{align*}
Again, we have $f_1 = 7$. CC and seq-CC return committees $W$ where only one candidate among $\quickset{c_1, \ldots, c_{7}} $ is included. Thus, they fail to select a $(2,4)$-IR committee.
\end{example}

A simple adaption of \Cref{ex:MinimaxAV} shows that Minimax-AV fails to provide any good approximation for IR even on $t$-PART instances. From Table~4.1 of the recent survey by \citet{LaSk22a}, the only rules missing are leximin-Phragmén and Monroe-AV for which we experimentally found profiles with $n=30$ voters and $m=300$ candidates where these two rules also fail to select a $(2,4)$-IR committee. 
Thus, we can conclude that \Cref{alg:ind-fair-DUE} outperforms all common ABC voting rules in terms of approximating IR on VI profiles.

\begin{table}[tb]
\begin{center}
\begin{tabular}{ c c c c}
\toprule
& \multicolumn{2}{c}{Individual Representation}   & semi-strong-JR \\ 
\cmidrule{2-3}
  & Lower Bound & Upper Bound \\ 
\midrule
 PART & $(1,0)$ & $(1,0)$ & \checkmark \\  
 $\alpha$-TR & $(1,0)$ & $(1,0)$ & \checkmark \\
 VEI & $(2-2/k,0)$ & $(2,0)$  & \checkmark\\ 
 CEI & $(2-2/k,0)$ & $(2,0)$  & \checkmark\\ 
 DUE & $(2-2/k,0)$ & $(2,4)$ & $\times$ \\
 VI & $(2-2/k,0)$ & $(2,4)$ & $\times$\\
 CI & $(1,k-1)$ & $(1,k-1)$  & $\times$\\
 \bottomrule
\end{tabular}
\caption{Individual representation guarantees in structured profiles. The column Lower Bound indicates that,
while the column Upper Bound gives us values $(\alpha, \beta)$ such that an $(\alpha, \beta)$-IR committee always exists (and can be computed efficiently). The last column refers to the existence of semi-strong JR committees in every instance.}  
\label{tab:IF_guarantees_on_structured_profiles}
\end{center}
\end{table}

\section{Further Domain Restrictions} \label{app:fur_dom_res}

We consider all the restricted domains that are discussed by \citet{ElLa15a} and \citet{yang2019tree}. An overview of how these domain restrictions are related to each other (which is adapted from \citet{yang2019tree}) can be found in \Cref{fig:relationship-restricted-domains}. An overview of the results of this section can be found in \Cref{tab:IF_guarantees_on_structured_profiles}.

\subsection{$t$-PART}

\begin{definition}[$t$-partition ($t$-PART)] An approval profile $A$ satisfies \emph{$t$-partition ($t$-PART)} if there is a partition $(C_1,\dots,C_t)$ of $C$ such that for every voter $i\in N$ there exists $C_j$ such that $A_i=C_j$.
\end{definition}
\begin{theorem}
Under $t$-PART approval profiles an IR committee always exists.
\end{theorem}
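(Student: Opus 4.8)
The plan is to exploit the rigid block structure of a $t$-PART profile, which makes each value $f_i$ easy to evaluate in closed form and causes the individual-representation demands to decouple completely across the blocks. First I would fix a partition $(C_1, \dots, C_t)$ of $C$ witnessing $t$-PART and group the voters accordingly: let $V_j = \{i \in N : A_i = C_j\}$ and $n_j = |V_j|$. Since every voter's approval set equals exactly one block, the sets $V_j$ partition $N$, so $\sum_{j} n_j = n$.

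The crucial observation is that for any nonempty $S \subseteq C_j$, a voter approves all of $S$ precisely when she approves the whole block $C_j$; formally $N(S) = V_j$, because the blocks are disjoint and each approval set is a single block, so no voter from another block can contain an element of $C_j$. Consequently, for a voter $i \in V_j$ the constraint $|N(S)| \geq |S| \cdot n/k$ in the definition of $f_i$ becomes $n_j \geq |S| \cdot n/k$, i.e. $|S| \leq n_j k / n$, and maximizing $|S|$ over $S \subseteq C_j$ yields
\[ f_i = f_j := \min\{\, |C_j|,\ \lfloor n_j k/n \rfloor \,\}. \]
In particular all voters in a block share the same demand, and since $A_i = C_j$ the IR condition $|W \cap A_i| \geq f_i$ is equivalent to the single per-block requirement $|W \cap C_j| \geq f_j$.

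It then remains to show these per-block demands are jointly satisfiable by a committee of size at most $k$. Each demand is locally feasible because $f_j \leq |C_j|$ by definition, so I would choose an arbitrary $W_j \subseteq C_j$ with $|W_j| = f_j$. The only global check is that the total does not exceed $k$, which follows from the counting bound
\[ \sum_{j} f_j \leq \sum_{j} \lfloor n_j k/n \rfloor \leq \sum_{j} \frac{n_j k}{n} = \frac{k}{n}\sum_{j} n_j = k. \]
Setting $W = \bigcup_j W_j$, disjointness of the blocks gives $|W| = \sum_j f_j \leq k$ and $|W \cap C_j| = f_j$, so $|W \cap A_i| = f_i$ for every voter, establishing IR.

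I do not expect a genuine obstacle: the entire argument rests on the single structural fact that in a $t$-PART profile the witnessing set $N(S)$ of any subset of a block is exactly the voters of that block, which collapses $f_i$ to a closed form and decouples the constraints. The only points needing slight care are the floor manipulation confirming that $\min\{|C_j|, \lfloor n_j k/n\rfloor\}$ is indeed $f_j$ (using that $|C_j|$ is an integer), and the final summation, which is exactly where the cohesiveness threshold $n/k$ ``pays for itself'' and guarantees the demands fit within $k$ seats.
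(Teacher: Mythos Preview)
Your proof is correct and follows essentially the same approach as the paper: compute $f_i$ explicitly per block, allocate $f_j$ seats from each $C_j$, and use $\sum_j \lfloor n_j k/n \rfloor \le k$ to bound the total. Your version is in fact slightly more careful than the paper's, since you take the minimum with $|C_j|$, a detail the paper glosses over.
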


\vspace{-0.5cm}

\begin{proof}
Note that for each voter $i\in N$ with $A_i=C_j$ for some $j\in [t]$, $f_i=\floor{N(C_j) \cdot k/n}$. Now consider the committee that contains $\floor{N(C_j) \cdot k/n}$ candidates from each $C_j$. Clearly, $W$ provides IR and also
\begin{align*}
  |W|=  \sum_{j \in T} \left\lfloor N(C_j) \cdot \frac{k}{n} \right\rfloor \le n\cdot \frac{k}{n}=k,
\end{align*}
which completes the proof.
\end{proof}

 \begin{figure}[t]
      \centering
      \begin{tikzpicture}
      [dom/.style={minimum width = 0.7cm, rectangle, draw,  align=center },
      connec/.style={dotted,line width=0.3mm}, lab/.style={minimum width = 1cm,   align=center }]

        \node[dom] (PART) {$t$-PART };
        
        \node[dom, below left =0.5cm and  1.4cm of PART,label={\textbf{\textcolor{red}{$(1,k-1)$}}}, fill=gray!40] (aVTR) {$\tilde{\alpha}$-VTR};
        
        \node[dom, below = 0.72cm of aVTR,  fill=gray!40] (bVPTR) {$\tilde{\beta}$-VPTR};
        
         \node[dom, below = 0.47cm of bVPTR, fill=gray!40] (VPTR) {VPTR};
         
         \node[dom, below = 0.47cm of VPTR, fill=gray!40] (VTR) {VTR};
         
         \draw[line, connec] (aVTR.south)--(bVPTR.north);
         
         \draw[line, connec] (bVPTR.south)--(VPTR.north);
         \draw[line, connec] (VPTR.south)--(VTR.north);

        \node[dom, below right =0.5cm and 2.3cm of PART, label={\textbf{\textcolor{green!70!black}{$(1,0)$}}}] (aTR) {$\tilde{\alpha}$-TR };

        \node[dom, below = 0.72cm of aTR,  fill=gray!40] (bPTR) {$\tilde{\beta}$-PTR};
        \draw[line, connec] (aTR.south)--(bPTR.north);
        \node[dom, below = 0.47cm of bPTR,  fill=gray!40] (PTR) {PTR};
        \draw[line, connec] (bPTR.south)--(PTR.north);
         \node[dom, below = 0.47cm of PTR,  fill=gray!40] (TR) {TR};
         \draw[line, connec] (PTR.south)--(TR.north);

         \node[dom, below left = 0.9cm and -0.68cm  of PART ] (WSC) {WSC};
         \node[dom, below right = 0.9cm and -0.68cm  of PART] (VEI) {VEI};
         
        \node[dom, below right = 0.9cm and 0.6cm  of PART ] (CEI) {CEI };
        
         \node[lab, below right = 0.15 and -0.9cm  of PART] (help) {\textbf{\textcolor{red}{$(2-2/k,0)$}}};

             \draw [decorate, decoration = {brace,mirror}, color=green!70!black] (0.7,-1.65) --  (1.7,-1.65);
         
           \node[lab, below right = 0.650 and -1.4cm  of help,] (help2) {\textbf{\textcolor{green!70!black}{$(2,0)$}}};

       \draw [decorate, decoration = {brace}, color=red] (-1,-1) --  (2,-1);

         \node[dom, below  = 2.5cm  of PART,  fill=gray!40 ] (DUE) {DUE };
          \draw[line, connec] (WSC.south)--(DUE.north west);
        \draw[line, connec] (VEI.south)--(DUE.north east);
        \draw[line, connec] (CEI.south)--(DUE.east);

         \node[dom, below left = 1.2cm and 0.2cm  of DUE, label= east:\textbf{\textcolor{green!70!black}{$(2,4)$}},  fill=gray!40 ] (VI) {VI};
        \node[dom, below right = 1.2cm and 0.2cm  of DUE,  fill=gray!40 ] (CI) {CI};
       
        \draw[line, connec] (CI.north east)--(bPTR.south west);
        
        \draw[line, connec] (aVTR.south)--(CI.north west);
         \draw[line, connec] (aTR.south west)--(VI.north east);
         
         \draw[line, connec] (PART.south)--(DUE.north);
         
         \draw[line, connec] (DUE.south)--(VI.north);
         \draw[line, connec] (DUE.south)--(CI.north);

        \draw[line, connec] (PART.west)--(aVTR.east);
       \draw[line, connec]   (PART.east)--(aTR.west);
       \draw[line, connec] (VI.north west)--(bVPTR.south east);
      
      \end{tikzpicture}
     \caption{%
     Upper (green) and lower  (red) bounds of approximate IR and guaranteed existence (white) or not (gray) of  semi-strong JR committees under the respective domain. An arrow from X to Y shows that X implies Y. An upper  (resp. lower) bound for each domain is the best (resp. worst) upper (resp. lower) bound of any of its descendants (resp. ancestors), including itself.  
     }
    \label{fig:relationship-restricted-domains}
 \end{figure}

\subsection{CEI, VEI and WSC}

\begin{definition}[Candidate Extremal Interval (CEI)] An approval profile $A$ satisfies \emph{Candidate Extremal Interval (CEI)} if there is a linear order of $C$ such that for every voter  $i \in N$, the approval set $A_i$ forms a prefix or a suffix of that order. 
\end{definition}

\begin{definition}[Voter Extremal Interval (VEI)] An approval profile $A$ satisfies \emph{Voter Extremal Interval (VEI)} if there is a linear order of $N$ such that for every candidate $c_j \in C$, the set $N(\{c_j\})$ of voters approving $c_j$ forms a prefix or a suffix of that order. 
\end{definition}  

\begin{definition}[Weakly Single-Crossing (WSC)] An approval profile $A$ satisfies \emph{Weakly Single-Crossing (WSC)} if there is a linear order of $N$ such that
for each pair of candidates $c$,$c'$ in $C$ it holds that each
of the voter sets $N_1=\{i: c \in A_i, c' \not \in A_i \}$,    $N_1=\{i: c \not\in A_i, c'  \in A_i \}$ and $N_3= N\setminus (N_1 \cup N_2)$ forms an interval of this ordering, and $N_3$
appears between $N_1$ and $N_2$. 
\end{definition}

\begin{restatable}{theorem}{ubcei}
For any CEI and VEI profile, there exists a $(2,0)$-IR  and a semi-strong JR committee,  and both can be found in polynomial time.
\end{restatable}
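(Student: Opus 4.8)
The plan is to exploit the extremal (prefix/suffix) structure of CEI and VEI to reduce the two-sided representation demand to two \emph{independent} one-dimensional covering problems, each solvable greedily with a budget of about $k/2$ candidates. A convenient preliminary observation is that any $(2,0)$-IR committee automatically provides semi-strong JR: if $2\,|\s \cap A_i| \ge f_i \ge 1$, then the integer $|\s \cap A_i|$ is at least $1$. Hence it suffices to construct a single $(2,0)$-IR committee in polynomial time, and this committee witnesses both claims of the theorem.

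For CEI I would fix a candidate order $c_1,\dots,c_m$ and split the voters into \emph{prefix-voters} (with $A_i=\{c_1,\dots,c_{t_i}\}$) and \emph{suffix-voters} (with $A_i=\{c_{s_i},\dots,c_m\}$). Since $N(S)$ depends only on $\min(S)$ and $\max(S)$, for a prefix-voter the value $f_i$ is realized by the candidate prefix $\{c_1,\dots,c_{f_i}\}$, so in particular $f_i \le t_i$. I then add the candidate prefix $\{c_1,\dots,c_L\}$ with $L=\max_{\text{prefix } i}\ceil{f_i/2}$ and, symmetrically, a candidate suffix of length $L'=\max_{\text{suffix } i}\ceil{f_i/2}$. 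Each prefix-voter receives $\min(L,t_i)\ge \ceil{f_i/2}$ approved candidates (using $f_i\le t_i$), and likewise for suffix-voters, so the union is $(2,0)$-IR. Since $f_i\le k$ always, if some maximal prefix- or suffix-demand equals $k$ there are already $k$ candidates approved by \emph{every} voter, and any $k$ of them form an IR committee; otherwise $L,L'\le\ceil{(k-1)/2}$ and $L+L'\le 2\ceil{(k-1)/2}\le k$.

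For VEI I would fix a voter order and split the \emph{candidates} according to whether their support is a prefix $[1,r_c]$ (left-anchored) or a suffix $[l_c,n]$ (right-anchored). Writing $S_i^* = P_i \sqcup Q_i$ into its left- and right-anchored parts, the identity $N(S_i^*)=[\max_{Q_i} l_c,\ \min_{P_i} r_c]$ shows that $P_i$ alone certifies a one-sided value $p_i^L$ (the largest number of left-anchored candidates approved by $i$ whose common support is large enough) and symmetrically for $p_i^R$, giving the decomposition $f_i \le p_i^L + p_i^R$. Because a left-anchored candidate with endpoint $r$ represents exactly the voters $[1,r]$, the left demand is a nested covering problem, and $p_i^L$ is non-increasing along the order; sorting the left-anchored candidates by endpoint descending and taking the top $\ceil{P^*/2}$, where $P^*=\max_i p_i^L$, gives every voter $i$ at least $\ceil{p_i^L/2}$ approved left-candidates (the globally top $p_i^L$ endpoints all exceed $i$). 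Doing the same on the right and taking the union yields $|\s\cap A_i|\ge \ceil{p_i^L/2}+\ceil{p_i^R/2}\ge f_i/2$, i.e.\ $(2,0)$-IR, with size $\le 2\ceil{(k-1)/2}\le k$ after disposing of the case $P^*=k$ or $Q^*=k$, which again produces $k$ all-approved candidates and hence an outright IR committee.

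The hard part will be the VEI decomposition lemma $f_i \le p_i^L + p_i^R$ together with the simultaneous correctness of the greedy one-sided selection: one must argue that restricting $S_i^*$ to a single side only \emph{relaxes} the cohesiveness constraint (the common support of one part can only be larger than that of the whole), and that the single family of top-$\ceil{P^*/2}$ candidates meets the monotone demand of all voters at once. The reduction to a committee of size exactly $k$ is the remaining technical nuisance, resolved by the all-approved-candidate shortcut described above. Polynomial running time is immediate, since $p_i^L$, $p_i^R$, and the $f_i$-values of prefix/suffix voters are each computed by elementary one-dimensional sweeps over the voter or candidate order.
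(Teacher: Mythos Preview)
Your approach is essentially the paper's: in both CEI and VEI you pick the ``extremal'' candidates from each side (a prefix and a suffix of a suitable candidate order) and argue that every voter is covered to half their demand. The paper does this with a fixed budget of $\lfloor k/2\rfloor$ and $\lceil k/2\rceil$ and the direct bound $|A_i\cap W|\ge \min(\lfloor k/2\rfloor,\,|A_i|)$ together with $f_i\le\min(|A_i|,\,k-1)$; your adaptive budgets and the VEI decomposition $f_i\le p_i^L+p_i^R$ are correct but not needed, and the paper's version avoids computing any $f_i$- or $p_i$-values at all. Your observation that $(2,0)$-IR already implies semi-strong JR is a clean shortcut the paper does not take.

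One claim in your CEI part is false, though harmless: it is \emph{not} true that for a prefix-voter $f_i$ is realized by $\{c_1,\dots,c_{f_i}\}$. For fixed size $\ell$, the set $S\subseteq A_i$ maximising $|N(S)|$ is the most \emph{central} block, not the leftmost prefix, since suffix-voters contribute to $N(S)$ only when $\min(S)$ is large (e.g.\ one prefix voter $\{c_1,\dots,c_5\}$ and many suffix voters $\{c_3,\dots,c_m\}$: $\{c_3,c_4,c_5\}$ is supported by everyone, $\{c_1,c_2,c_3\}$ only by the single prefix voter). Your argument survives because the only consequence you actually use is $f_i\le t_i$, which is just $f_i\le |A_i|$ and holds trivially.
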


\vspace{-0.5cm}

\begin{proof}
Assume we are given a CEI profile and the order $c_1, \dots, c_m$ over the candidates such that  each $A_i$ is either a prefix or a suffix of that order. If $f_i=k$ for some voter, then this means that there are at least $k$ candidates that are approved by all the voters and these candidates can form the winning committee which clearly is IR and semi-strong JR.  Otherwise, if $f_i \leq k -1$ for all voters $i \in [n]$,  we set 
\begin{align*}
\s = \{c_1, \dots, c_{\floor{k/2}}, c_{m -\ceil{k/2}}, \dots, c_m \},
\end{align*}
i.e., $\s$ consists of the first $\floor{k/2}$ candidates and the last $\ceil{k/2}$ candidates of the order. Hence, for each $i \in [n]$, it holds that $|A_i \cap \s|\geq \min\{k/2-1,|A_i|\}$, and as $f_i\leq k-1$, we get that $\s$ provides $(2,0)$-IR.  Now notice that as $\s$ consists of the first and the last candidate in the order. Hence,  $\s$ provides semi-strong-JR as  each voter approves one of these two candidates. 

Now, assume we are given a VEI profile and the order over the voters. %
We order the candidates as follows. Let $C_{p}$ and $C_{s}$ be the set of candidates such that the voters that approve a candidate in $C_p$ and  $C_{s}$ form a prefix and a suffix of $N$, respectively. Without loss of generality, a candidate $c$  such that $N(c)=N$ is assigned to $C_p$.  A candidate $c$ in $C_p$ is ordered before $c'$ in $C_{p}$ if the last voter that approves $c$ is ordered after the last voter that approves $c'$ (break ties arbitrarily). A candidate $c$ in $C_s$ is ordered before $c'$ in $C_{s}$ if the first voter that approves $c$ is ordered after the first voter that approves $c'$ (break ties arbitrary). All the candidates in $C_{p}$ are ordered before the candidates in $C_s$. As above, if $f_i=k$ for some voter, then this means that there are at least $k$ candidates that are approved by all the voters and these candidates form the winning committee which clearly is IR and semi-strong JR.   Otherwise, if  $\s$ consists of the first $\floor{k/2}$ candidates and the last $\ceil{k/2}$ candidates of the order that we describe above, then for each $i \in [n]$, it holds that $|A_i \cap \s|\geq \min\{k/2-1,|A_i|\}$, and as $f_i\leq k-1$, we again get that $\s$ provides $(2,0)$-IR. Now similarly as above, as $\s$ consists of the first and the last candidate of the order, $\s$ is  semi-strong-JR as if $f_i\geq 1$ for some $i\in N$, then $i$  approves some candidate between these two candidates. 
\end{proof}

\begin{restatable}{theorem}{ssJRinWSC}
For any WSC profile, there exists a  semi-strong JR committee and it can be found in polynomial time.
\end{restatable}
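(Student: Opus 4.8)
The plan is to produce a size-$k$ committee $W$ with $|W\cap A_i|\ge 1$ for every voter with $f_i\ge 1$, and to find it efficiently. First I would recast semi-strong JR as a covering problem: $f_i\ge 1$ holds exactly when $i$ approves some candidate $c$ with $|N(\{c\})|\ge n/k$; call such candidates \emph{heavy} and such voters \emph{needy}. Then it suffices to select at most $k$ candidates so that every needy voter approves at least one of them, padding to size $k$ afterwards with arbitrary candidates.

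Second, I would extract the structure that WSC imposes. I fix the witnessing voter order and discard voters with $A_i=\emptyset$ (they are never needy, deleting voters preserves WSC on the induced order, and the threshold $n/k$ still refers to the original $n$). Using the betweenness clause of the WSC axiom I would establish two facts. (i) Each support $N(\{c\})$ is an interval of the order: for any other candidate $c'$ we have $N(\{c\})\subseteq N_1\cup N_3$, which is a contiguous block at one end of the order since $N_3$ lies between $N_1$ and $N_2$; thus a would-be ``hole'' voter $j$ between two supporters of $c$, together with any $c'\in A_j$, would land in $N_2$ at the opposite end, a contradiction. (ii) Two supports can properly \emph{cross} only if one is a prefix and the other a suffix of the order: for crossing intervals $[a,b],[a',b']$ with $a<a'\le b<b'$ the agreement set $N_3$ equals $[a',b]$ together with the voters outside $[a,b']$, which is disconnected unless $a=1$ and $b'=n$.

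Third, I would run the standard interval point-cover greedy: repeatedly take the leftmost still-uncovered needy voter $i$ and add to $W$ the heavy candidate in $A_i$ whose support reaches furthest to the right. This covers all needy voters and runs in polynomial time, so the only real issue is bounding the number of selected candidates by $k$. By the rightmost-endpoint rule no selected support is nested inside another (the outer one would already contain the earlier trigger voter and would have been chosen first), so by (ii) any two selected supports are either disjoint or a crossing prefix/suffix pair. In the purely disjoint case the selected supports are pairwise-disjoint heavy sets (each of size $\ge n/k$) inside $n$ voters, so there are at most $k$ of them and $|W|\le k$.

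The main obstacle is exactly the crossing prefix/suffix pairs from (ii), which the naive disjointness count fails to handle. I would resolve this by noting that a crossing prefix $[1,p]$ and suffix $[q,n]$ satisfy $q\le p$ and hence together cover the entire order; because the suffix has the maximal right endpoint, the greedy selects it immediately after the prefix, so whenever a crossing pair appears the algorithm halts after two candidates. Consequently every run either selects pairwise-disjoint heavy supports (at most $k$) or stops with two candidates covering everyone, and in both cases $|W|\le k$. Finally, recognizing WSC and computing the witnessing order is polynomial and every remaining step is elementary, which gives the claimed running time.
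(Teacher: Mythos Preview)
Your argument is correct, but it takes a substantially more elaborate route than the paper's. The paper's proof is essentially two lines: after discarding voters with empty approval sets (and assuming $k\ge 2$), pick any $c^\ast\in A_1$, let $i$ be the first voter in the WSC order not approving $c^\ast$, and pick any $c'\in A_i$. Applying the WSC betweenness clause to the pair $(c^\ast,c')$ forces the set $N_2=\{j:c^\ast\notin A_j,\ c'\in A_j\}$ to be exactly the suffix $[i,n]$, so every remaining voter approves $c^\ast$ or $c'$; hence $\{c^\ast,c'\}$ padded to size $k$ already provides semi-strong JR.

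Your approach instead extracts two structural facts about WSC profiles (supports are intervals; properly crossing supports must be a prefix/suffix pair) and then runs the standard rightmost-endpoint greedy for interval cover, arguing that the selected supports are never nested and are therefore either pairwise disjoint (at most $k$ heavy ones fit) or contain a crossing prefix/suffix pair, in which case the run necessarily terminates after two selections. All of this checks out, including the delicate point that a crossing pair among the selected candidates forces exactly two selections. What you gain is a pair of reusable structural lemmas about WSC; what you lose is brevity and the sharper conclusion that two candidates \emph{always} suffice, which the paper obtains directly without any greedy machinery or case analysis on crossings.
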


\vspace{-0.5cm}

\begin{proof}
Consider the order $1,\dots,n$ of the voters. Without loss of generality assume that for each $i \in N$, $|A_i|>1$ (otherwise we can just exclude them) and that $k \geq 2$. Now, let $c^*$ be a candidate that  is approved by voter $1$ and let $i>1$ be the first voter in the order that  does not approve $c^*$. This means that $i$ approves a candidate  $c'$ different from $c^*$, and the profile is WSC if all the voters from $i$ to $n$ approve $c'$. Hence, any committee $\s$ with $\{c^*,c'\} \subseteq \s$ provides semi-strong JR, as all the voters  are represented by at least one candidate.  
\end{proof}

\begin{restatable}{proposition}{lbCEI}
There exists an instance that is VEI, CEI and WSC, and does not admit a $(2-2/k,0)$-IR committee.
\label{lemm:lower-bound-CEI}
\end{restatable}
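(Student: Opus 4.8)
The plan is to reuse the very instance constructed for \Cref{lemm:VI-lower-bound} and to show that (i) it lies simultaneously in all three domains $\mathrm{VEI}$, $\mathrm{CEI}$, and $\mathrm{WSC}$, and (ii) it admits no $(2-2/k,0)$-IR committee. Recall that this instance has $m=2(k-1)$ candidates, voter~$1$ approving the prefix $A_1=\{c_1,\dots,c_{k-1}\}$, voter~$n$ approving the suffix $A_n=\{c_k,\dots,c_{2k-2}\}$, and every remaining voter approving all of~$\cC$; there it is already established that $A_1\cap A_n=\emptyset$ and $f_1=f_n=k-1$. Since these facts carry over verbatim, only the domain memberships and a sharper (boundary) counting argument remain.

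For the domain memberships I would exhibit one witness order for the candidate restriction and one for the two voter restrictions. Taking the candidate order $(c_1,\dots,c_m)$, voter~$1$ approves a prefix, voter~$n$ approves a suffix, and every other voter approves the whole set (which is both a prefix and a suffix), so the profile is CEI. Taking the voter order $(1,\dots,n)$, each candidate $c_j$ with $j\le k-1$ is approved exactly by $\{1,\dots,n-1\}$ (a prefix) and each $c_j$ with $j\ge k$ exactly by $\{2,\dots,n\}$ (a suffix), so the profile is VEI. For WSC I would reuse the voter order $(1,\dots,n)$ and observe that voters $2,\dots,n-1$ approve every candidate; hence for any pair $c,c'$ the only voters that can lie in $N_1$ or $N_2$ are $1$ and $n$, while all of $\{2,\dots,n-1\}$ lies in $N_3$. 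A short case check---the only nontrivial pair being one prefix candidate and one suffix candidate, for which $N_1=\{1\}$, $N_2=\{n\}$, and $N_3=\{2,\dots,n-1\}$---shows that $N_1,N_2,N_3$ are intervals and that $N_3$ sits between $N_1$ and $N_2$, so the profile is WSC.

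For the impossibility I would argue as follows. A committee $\s$ provides $(2-2/k,0)$-IR only if $(2-2/k)\,|A_i\cap\s|\ge f_i$ for $i\in\{1,n\}$, i.e.\ only if $|A_i\cap\s|\ge f_i\cdot\frac{k}{2k-2}=\frac{k}{2}$, and hence, by integrality, only if $|A_i\cap\s|\ge\lceil k/2\rceil$ for both extreme voters. As $A_1$ and $A_n$ are disjoint, this forces $|\s|\ge|A_1\cap\s|+|A_n\cap\s|\ge 2\lceil k/2\rceil$, which strictly exceeds the budget $k$ precisely when $k$ is odd. Choosing any odd $k\ge 3$ therefore yields an instance---VEI, CEI, and WSC by the above---that admits no $(2-2/k,0)$-IR committee, establishing the claim. (For even $k$ the balanced committee taking $k/2$ seats from each side meets the threshold with equality, so the strict failure is available only through the weaker bound $\alpha<2-2/k$ already recorded in \Cref{lemm:VI-lower-bound}.)

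I expect the WSC verification to be the only genuinely delicate point, since CEI, VEI, and the counting are essentially immediate once the instance is fixed; the one subtlety in the counting is that strict failure at the \emph{exact} value $\alpha=2-2/k$ (rather than only for $\alpha<2-2/k$) relies on integrality and is cleanest for odd~$k$, which is why I would parametrize the exhibited instance by an odd committee size.
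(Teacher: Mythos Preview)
Your approach mirrors the paper's: take the instance from \Cref{lemm:VI-lower-bound} and verify membership in CEI, VEI, and WSC. Your domain checks are correct and in fact more careful than the paper's---your WSC case analysis is explicit, whereas the paper just notes that every candidate is approved by all of $[n-1]$ or all of $[n]\setminus\{1\}$ (and, incidentally, the paper's prose swaps the labels CEI and VEI in its verification, though the underlying arguments are right).

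You also spot a point the paper elides: \Cref{lemm:VI-lower-bound} as stated only rules out $(\alpha,0)$-IR for $\alpha<2-2/k$, whereas the present proposition asserts failure at the \emph{exact} value $\alpha=2-2/k$. Your observation that the boundary failure genuinely requires odd~$k$---because for even~$k$ the balanced committee with $k/2$ seats from each side meets the threshold $(2-2/k)\cdot k/2=k-1$ with equality---is correct and cleanly closes this small gap. The paper simply defers to \Cref{lemm:VI-lower-bound} ``from which the lemma follows,'' without isolating the parity issue.
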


\vspace{-0.5cm}

\begin{proof}
Consider the instance defined in the proof of \Cref{lemm:VI-lower-bound}.   
We now show that the profile is CEI, VEI and WSC, from which the lemma follows. Indeed, for CEI, if we order the candidates as $c_1$, $c_2, \dots, c_m$, then each $A_i$  forms a prefix or a  suffix of the ordering. For VEI, if we order the voters as $1$, $2, \dots, n$, then the voters that approve each candidate  form a prefix or a  suffix of the ordering. Under the same ordering of the voters, note that for each candidate $c \in \cC$ we have $c \in A_i$ for either all $i \in [n-1]$ or for all $i \in [n] \setminus \quickset{1}$. Thus, the profile is also WSC. 
\end{proof}

\subsection{DUE}
\begin{definition}[Dichotomous Uniformly Euclidean (DUE)] An approval profile $A$ satisfies \emph{Dichotomous uniformly Euclidean (DUE)} if there is a mapping of voters and candidates into the real line and a radius $r$ such that  every voter $i \in \N$ approves the candidates that are at most $r$ far from her.
\end{definition}

\begin{restatable}{proposition}{lemmaDUEnossJR}
There exists a DUE profile which does not admit a semi-strong JR committee.
\end{restatable}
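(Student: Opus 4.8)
The plan is to exhibit an explicit DUE instance in which more than $k$ voters each deserve representation but have pairwise disjoint approval sets, so that representing all of them would require a committee of size exceeding $k$. Recall that a voter $i$ must be represented under semi-strong JR exactly when $f_i\ge 1$, i.e.\ when $i$ approves some candidate $c$ with $|N(\{c\})|\ge n/k$. Hence it suffices to build a DUE profile containing $k+1$ such voters whose approval sets are pairwise disjoint singletons: any committee representing all of them must contain all $k+1$ distinct approved candidates, which is impossible for a committee of size at most $k$.

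Concretely, I would fix $k=3$ and place, on the real line with common radius $r=3/5$, four candidates $c_1,\dots,c_4$ at positions $1,2,3,4$. I then add nine voters: \emph{isolated} voters $u_1,\dots,u_4$ at positions $1,2,3,4$ (each approving only the co-located candidate, since the neighbours sit at distance $1>r$); \emph{bridging} voters $w_1,w_2,w_3$ at the midpoints $1.5,2.5,3.5$ (each approving the two adjacent candidates, both at distance $1/2<r$); and two \emph{reinforcement} voters $z_1,z_2$ at positions $1$ and $4$. By construction this profile is DUE, and the approval sets are $A_{u_j}=\{c_j\}$, $A_{w_j}=\{c_j,c_{j+1}\}$, $A_{z_1}=\{c_1\}$, and $A_{z_2}=\{c_4\}$.

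The key verification is that every candidate has exactly $3=n/k$ supporters (here $n=9$): $c_1$ is approved by $u_1,w_1,z_1$; $c_2$ by $u_2,w_1,w_2$; $c_3$ by $u_3,w_2,w_3$; and $c_4$ by $u_4,w_3,z_2$. Thus $f_{u_j}\ge 1$ for every $j$, so each $u_j$ must be represented under semi-strong JR. Since $A_{u_1},\dots,A_{u_4}$ are pairwise disjoint, any committee representing all four must contain $c_1,c_2,c_3,c_4$ and therefore has size $4>k$; consequently no semi-strong JR committee exists.

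The main obstacle, and the reason the construction is delicate, is the uniform radius of DUE: if two popular candidates lie more than $2r$ apart then their supporter sets $[x-r,x+r]$ are disjoint, and a disjoint-approval deserving voter sitting between two popular candidates forces exactly this separation. A short counting argument along these lines shows the construction is \emph{impossible} for $k\le 2$ (the two extreme popular candidates would then have disjoint supporter sets each of size at least $n/2$, while the middle deserving voter lies in neither, forcing a total of at least $n$ supporters among at most $n-1$ voters), which is why $k=3$ is the smallest feasible case and why the reinforcement voters $z_1,z_2$ are needed to lift the support of the two extreme candidates up to the threshold. The same pattern---candidates at $1,\dots,k+1$, isolated voters, midpoint bridges, and endpoint reinforcements---generalises to every $k\ge 3$ after a routine adjustment of the numbers of bridging and reinforcement voters.
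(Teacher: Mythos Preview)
Your construction is correct and follows essentially the same approach as the paper: both exhibit a DUE instance with $k=3$ and four candidates in which four voters have pairwise disjoint singleton approval sets and each belongs to a $1$-cohesive group, forcing any semi-strong JR committee to contain all four candidates. The paper's instance is slightly more economical---it uses $n=6$ (voters at integer positions $1,\dots,6$, candidates at $1.5,2.5,4.5,5.5$, radius $0.5$), so that $n/k=2$ and each candidate needs only two supporters, eliminating the need for your reinforcement voters---but the underlying idea is identical.
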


\vspace{-0.5cm}

\begin{proof}
Consider the following instance with $n=6$ voters and let $k=3$:
\begin{align*}
    A_1 &= \quickset{c_1}  &&A_4 = \quickset{c_3}\\
    A_2 &= \quickset{c_1,c_2}  &&A_5 = \quickset{c_3,c_4}\\
    A_3 &= \quickset{c_2}  &&A_6 = \quickset{c_4}.
\end{align*}
To see that this is a DUE profile, consider the following mapping of the instance onto the real line. Each voter $i \in [6]$ is mapped to the point $i$ and candidates $c_1$ to $c_4$ are mapped to  1.5, 2.5, 4.5 and 5.5, respectively. From this mapping we obtain the above profile by using an approval radius of $0.5$.
We observe $f_i=1$ for all $i \in [6]$, but for any committee $\s$ of size 3 it holds that $|A_i\cap \s|=0$ for some $i \in \{1, 3,4,6\}$.
\end{proof}

\subsection{$\tilde{\alpha}$-Tree Representation}

\begin{definition}[$\tilde{\alpha}$-tree representation ($\tilde{\alpha}$-TR)]
An approval profile $A$ satisfies \emph{$\tilde{\alpha}$-tree representation} ($\tilde{\alpha}$-TR)\footnote{We use $\tilde{\alpha}$ instead of $\alpha$ in order to avoid confusion with the use of $\alpha$ for denoting an approximation of IR. Same with $\tilde{\alpha}$-VTR, $\tilde{\beta}$-VPTR and $\tilde{\beta}$-PTR.} if there exists a rooted tree $T$ with vertices $C \cup \{x\}$ and root $x \notin C$ such that for every voter $i \in N$ there is a candidate $c \in C$ such that $A_i$ equals the set of vertices on the path from $x$ to $c$ (excluding $x$ but including $c$).
\end{definition}

Under $\tilde{\alpha}$-TR preferences, the following algorithm always selects a committee that provides individual representation. Here we use $\dist(c,x)$ to denote the (edge-)distance between the root $x$ and a candidate $c \in C$.
\begin{algorithm}[h]
	\caption{Individual representation for $\tilde{\alpha}$-TR}\label{alg:ind-fair-aTR}
	\begin{algorithmic}[1]

	    \State $W \gets \emptyset$
        \State	traverse the tree in any order
		\For {$c \in T$}
			\If{$|N(\{c\})| \geq \frac{n}{k} \cdot \dist(c,x)$}
				\State $W \gets W \cup \{c\}$	
			\EndIf
		\EndFor
		\If{$|W| < k$}\label{ln:aTR_notk}
				\State fill up $W$ arbitrarily	\label{ln:aTR_fill}
			\EndIf
		\\ \Return $W$
	\end{algorithmic}
\end{algorithm}

Note that this algorithm basically resembles HareAV as defined by \citet{ABC+16a} with a specific tie-breaking mechanism that depends on the tree representation. Due to this tie-breaking, it is clearly polynomial time computable.

\begin{restatable}{theorem}{alphatree}\label{thm:alphatree}
Under $\tilde{\alpha}$-TR preferences, a committee providing individual representation always exists and can be found in polynomial time.
\end{restatable}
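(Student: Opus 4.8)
The plan is to exploit the path structure of approval sets in the $\alpha$-TR domain. Write $c(i)$ for the candidate such that $A_i$ consists of all candidates on the root-to-$c(i)$ path (excluding the root $x$); then every $A_i$ is a downward path from the root, the candidates a voter approves are totally ordered by the ancestor relation, and a voter approves $c$ precisely when $c$ is an ancestor of (or equal to) $c(i)$. First I would record the two monotonicity facts that drive everything: (i) if $c$ is an ancestor of $c'$ then $N(\{c'\}) \subseteq N(\{c\})$ while $\dist(c,x) < \dist(c',x)$; and (ii) for any $S \subseteq A_i$ with deepest element $\hat c$ we have $N(S) = N(\{\hat c\})$, since any voter passing through $\hat c$ automatically passes through all ancestors of $\hat c$. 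From (ii), together with $|S|\le \dist(\hat c,x)$, the largest admissible set with deepest element $\hat c$ is the whole root-to-$\hat c$ path, which yields the clean formula
\[ f_i = \max\{\dist(c,x) : c \in A_i,\ |N(\{c\})| \ge \tfrac{n}{k}\dist(c,x)\}, \]
(with $f_i=0$ if no candidate on the path qualifies). The qualifying condition is exactly the test in \Cref{alg:ind-fair-aTR}, which selects every candidate $c$ with $|N_c| = |N(\{c\})| \ge \frac{n}{k}\dist(c,x)$; let $W_0$ denote the set chosen before the fill-up step.

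Next I would establish IR, i.e.\ $|W_0 \cap A_i| \ge f_i$ for every voter $i$. Let $c^*$ attain the maximum above, so $\dist(c^*,x)=f_i$ and $c^* \in A_i$. Every ancestor $c$ of $c^*$ also lies in $A_i$ and has depth $\dist(c,x)\le f_i$, and by monotonicity (i), $|N(\{c\})| \ge |N(\{c^*\})| \ge \frac{n}{k}f_i \ge \frac{n}{k}\dist(c,x)$, so $c$ passes the selection test. Hence all $f_i$ candidates on the path from $x$ to $c^*$ belong to $W_0\cap A_i$, giving $|W_0\cap A_i|\ge f_i$. Since the fill-up step only adds candidates, the final committee inherits this bound and therefore provides IR.

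The main obstacle is feasibility, namely $|W_0|\le k$, so that fill-up produces a size-$k$ committee. Here I would use that $W_0$ is closed under taking ancestors: the computation above shows that whenever $c$ is selected so is every ancestor of $c$, hence $W_0\cup\{x\}$ is a subtree of $T$ containing the root. Let $L$ be the set of leaves of this subtree. For distinct $c,c'\in L$ neither is an ancestor of the other (ancestor-closedness would otherwise make the shallower one non-leaf), so no voter can approve both, as that would place both on a single root-path; thus the sets $N(\{c\})$ for $c\in L$ are pairwise disjoint, whence $\sum_{c\in L}|N(\{c\})|\le n$. Combining with $|N(\{c\})|\ge \frac{n}{k}\dist(c,x)$ gives $\sum_{c\in L}\dist(c,x)\le k$.

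Finally, every node of the subtree lies on the root-path of at least one leaf, so $|W_0|\le \sum_{c\in L}\dist(c,x)\le k$, which is the crux of the feasibility argument. The fill-up step then raises the committee to size exactly $k$ without dropping any selected candidate, preserving IR, and since the tree can be traversed and each test evaluated in polynomial time, the whole procedure runs in polynomial time, completing the argument.
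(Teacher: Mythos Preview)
Your proof is correct and follows the same overall strategy as the paper: both show that the set $W_0$ selected by \Cref{alg:ind-fair-aTR} is closed under ancestors (hence forms a rooted subtree), that every candidate on the root-to-$c^*$ path witnessing $f_i$ is selected, and that $|W_0|\le k$ via a voter-counting argument on that subtree.

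The one noteworthy difference is in the feasibility step. The paper argues informally by assigning pairwise disjoint blocks of $n/k$ voters to the \emph{edges} of the subtree $T[W_0]$, concluding that the subtree has at most $k$ edges and hence at most $k$ non-root vertices. Your argument instead works with the \emph{leaves} $L$ of the subtree: the sets $N(\{c\})$ for $c\in L$ are pairwise disjoint (since two incomparable vertices cannot both lie on a single root path), whence $\sum_{c\in L}\dist(c,x)\le (k/n)\sum_{c\in L}|N(\{c\})|\le k$, and every vertex of $W_0$ lies on some root-to-leaf path, so $|W_0|\le \sum_{c\in L}\dist(c,x)$. Both arguments encode the same double-counting idea, but yours avoids the slightly hand-wavy fractional voter assignment and makes the inequality chain fully explicit. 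Your derivation of the closed-form expression for $f_i$ (via $N(S)=N(\{\hat c\})$ for the deepest element $\hat c$) is also a nice addition that the paper leaves implicit.
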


\vspace{-0.5cm}

\begin{proof}
Let $T[W]$ be the subgraph of $T$ induced by the candidate set $W$ chosen by \Cref{alg:ind-fair-aTR}. 
\new{First, note that $T[W]$ is a subtree of $T$ with root $x$. To see this, let $c \in W$ and let $c'$ be the direct ancestor of $c$ (i.e., $c'$ is the candidate immediately before $c$ on the path from $x$ to $c$). By the $\tilde{\alpha}$-TR property we have $N(\{c\}) \subseteq N(\{c'\})$ and by the definition of the distance function it holds that $\dist(c,x) = \dist(c',x) + 1$. Thus $|N(\{c'\})| \geq \frac{n}{k} \cdot \dist(c',x)$ and $c'$ is also in $W$.
We now show that the committee $W$ is of size $k$. Then, we show that it indeed provides individual representation.}

By \Cref{ln:aTR_notk,ln:aTR_fill} of \Cref{alg:ind-fair-aTR} it holds that $|W| \geq k$. Now consider $T[W]$; we will assign $\frac{n}{k}$ voters to each edge of this tree. \new{In order to do this, for a leave $c$ of the tree choose $\frac{n}{k}$ (yet unassigned) voters from $N(\{c\})$ and assign them to the edge incident to $c$. (If $\frac{n}{k}$ is not an integer we assign one voter only partially.) Then we delete $c$ and that edge from the tree, starting this process again with a leave of the smaller tree.
Since $|N(\{c\})| \geq \frac{n}{k} \cdot \dist(c,x)$ and $N(\{c\}) \subseteq N(\{c'\})$ for all candidates $c'$ on the path from $x$ to $c$, there always exist $\frac{n}{k}$ such voters. Since there are only $n$ voters in total we have $|W| \leq k$.}

To show that satisfies IR, consider a group of voters $V \subseteq N$ such that $|V| \geq \ell \cdot \frac{n}{k}$ and $|\bigcap_{i \in V} A_i| \geq \ell$ for some $\ell \in \mathbb{N}$. By the definition of $\tilde{\alpha}$-TR there is a path of (edge-)length $\ell$ from $x$ to some $c \in C$ such that the candidates on that path (including $c$) are a subset of $\bigcap_{i \in V} A_i$ of size $\ell$. We call the set of these candidates $A_V \subseteq \bigcap_{i \in V} A_i$.
Every candidate in $A_V$ is approved by all of $V$, i.e., by at least $\ell \cdot \frac{n}{k}$ voters, and is at distance $\leq \ell$ from the root of $T$. Thus \Cref{alg:ind-fair-aTR} chooses all candidates in $A_V$ and therefore we have $|A_i \cap W| \geq \ell$ for all $i \in V$.
\end{proof}

\subsection{$\tilde{\alpha}$-VTR}

\begin{definition}[$\tilde{\alpha}$-vertex tree representation ($\tilde{\alpha}$-VTR)]
An approval profile $A$ satisfies \emph{$\tilde{\alpha}$-vertex tree representation} ($\tilde{\alpha}$-VTR) if there exists a rooted tree $T$ with vertices $V \cup \{x\}$ and root $x \notin V$, such that for every candidate $c_j \in C$ there exists a voter approving $c_j$ such that the set $N(\{c_j\}$ of voters approving $c_j$ forms a path from $x$ to that voter (excluding $x$ but including the voter).
\end{definition}
\begin{proposition}\label{prop:alpha-vtr}
There exists an $\tilde{\alpha}$-VTR instance  that does not admit an $(\alpha,\beta)$-IR committee for $\beta<k-1$ and any $\alpha \geq 1$.
\end{proposition}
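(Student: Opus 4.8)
The plan is to reuse the very instance that establishes \Cref{thm:no-beta-approx} and to verify that it in fact lies in the $\alpha$-VTR domain; since that theorem already rules out $(\alpha,\beta)$-IR for $\beta<k-1$ and any $\alpha\geq 1$, the inapproximability conclusion then transfers verbatim. Recall that instance: set $n=k(k+1)$, so that $n/k=k+1$; the first $n/k$ voters each approve a private block of $k-1$ candidates, namely $A_i=\set{c_{(k-1)(i-1)+1},\dots,c_{(k-1)i}}$ for $i\in[n/k]$, while the remaining $n-n/k$ voters, which I will call the \emph{universal} voters and collect in a set $U$, approve every candidate. Thus it only remains to exhibit a vertex tree representation of this profile.

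First I would record the key structural fact: for any candidate $c_j$ lying in block $i$, its supporter set is exactly $N(\set{c_j})=\set{i}\cup U$, because $c_j$ is approved precisely by the block owner $i$ together with all universal voters. In particular, all $k-1$ candidates in a given block share the same supporter set, so there are only $n/k$ distinct supporter sets to realize, one per block owner. Next I would build the tree $T$ on vertex set $\N\cup\set{x}$ as follows: arrange the universal voters into a single path hanging off the root, $x-u_1-u_2-\cdots-u_{|U|}$, and then attach each block owner $i\in[n/k]$ as a leaf child of the last universal voter $u_{|U|}$. With this tree, the path from the root $x$ to the leaf $i$ visits exactly $u_1,\dots,u_{|U|}$ and then $i$, so its vertex set (excluding $x$) is precisely $U\cup\set{i}=N(\set{c_j})$ for every candidate $c_j$ in block $i$. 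Hence every candidate's supporter set is a root-to-node path of $T$, which is exactly the $\alpha$-VTR requirement.

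The argument becomes essentially immediate once the tree is written down, and the only point requiring care is precisely this choice of tree: the universal voters must be placed on a common root path so that each private block owner can be appended as a leaf extending that shared path. I would emphasize that $\alpha$-VTR constrains only the supporter sets of candidates, not the approval sets of voters, so no additional checks (e.g.\ on the universal voters, whose approval sets are all of $\cC$) are needed. Combining $\alpha$-VTR membership with \Cref{thm:no-beta-approx} then yields the stated bound.
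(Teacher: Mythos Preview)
Your proposal is correct and is essentially the same argument as the paper's: both reuse the instance from \Cref{thm:no-beta-approx} and verify $\alpha$-VTR membership by placing the universal voters on a single path from the root and attaching each of the $n/k$ block owners as leaves at the far end. Your write-up is somewhat more explicit about why each candidate's supporter set coincides with the corresponding root-to-leaf path, but the construction and the overall logic are identical.
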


\vspace{-0.5cm}

\begin{proof}
We note that the instance in \Cref{thm:no-beta-approx} is also a $\tilde{\alpha}$-VTR instance, from which the statement follows. To see this, consider a rooted tree with root $x$ and vertices $v \in V$ as follows: the voters $v_{\frac{n}{k}+1}, \ldots, v_n$ form a simple path starting from $x$. All the remaining voters $v_1, \ldots, v_{\frac{n}{k}}$ are incident to the vertex farthest away from $x$ and make up the leaves of the tree. The set of voters approving a common candidate in the above instance now form a path from $x$ to one of the leaves $v_1, \ldots, v_{\frac{n}{k}}$ and thus this is an $\tilde{\alpha}$-VTR representation of that instance. 
\end{proof}

Note that the instance discussed in the proof of \Cref{prop:alpha-vtr} does not admit a committee providing semi-strong JR.

\end{document}